\newcommand{\parbold}[1]{\vspace{4pt}\noindent\textbf{#1}. }
\renewcommand{\cite}{\citep}
\newcommand\pr{\text{Pr}}
\newcommand\bbR{\ensuremath{\mathbb{R}}} %
\newcommand\bbN{\ensuremath{\mathbb{N}}} %
\newcommand\bbE{\ensuremath{\mathbb{E}}} %
\newcommand\bbI{\ensuremath{\mathbb{I}}}
\newcommand\mO{\ensuremath{\mathcal{O}}}
\newcommand\mB{\ensuremath{\mathcal{B}}}
\newtheorem{remark}{Remark}
\newtheorem{definition}{Definition}
\title{Who is in Your Top Three?\\Optimizing Learning in Elections with Many Candidates}
\author{Nikhil Garg\\Stanford University\\nkgarg@stanford.edu \and Lodewijk L. Gelauff\\Stanford University\\lodewijk@stanford.edu \and Sukolsak Sakshuwong\\Stanford University\\sukolsak@stanford.edu \and Ashish Goel\\Stanford University\\ashishg@stanford.edu\\	
}
\begin{document}
\onecolumn
\maketitle

\begin{abstract}
\makeatletter{}%
Elections and opinion polls often have many candidates, with the aim to either rank the candidates or identify a small set of winners according to voters' preferences. In practice, voters do not provide a full ranking; instead, each voter provides their favorite $K$ candidates, potentially in ranked order. The election organizer must choose $K$ and an aggregation rule. %
	
We provide a theoretical framework to make these choices. Each $K$-Approval or $K$-partial ranking mechanism (with a corresponding positional scoring rule) induces a \textit{learning rate} for the speed at which the election recovers the asymptotic outcome. Given the voter choice distribution, the election planner can thus identify the rate optimal mechanism. Earlier work in this area provides coarse order-of-magnitude guaranties which are not sufficient to make such choices. 
Our framework further resolves questions of when randomizing between multiple mechanisms may improve learning for arbitrary voter noise models.

Finally, we use data from 5 large participatory budgeting elections that we organized across several US cities, along with other ranking data, to demonstrate the utility of our methods. In particular, we find that historically such elections have set $K$ too low and that picking the right mechanism can be the difference between identifying the ultimate winner with only a $80\%$ probability or a $99.9\%$ probability after 400 voters. %

%
	
%
%
%
%
%
%
% %

\end{abstract}

\makeatletter{}%
\section{Introduction}
\noindent 
		Elections and opinion polls with many candidates and multiple winners are common. 
    	In participatory budgeting (PB), for example, people directly determine a part of the government's budget~\cite{alos-ferrer_two_2011,goel_knapsack_2016}. These elections often contain many candidate projects (up to 70, cf. \citet{gelauff_comparing_2018}) and only a few thousand voters, with potentially millions of dollars on the line \cite{public_agenda_public_2016}. Similarly, polls may compare tens of candidates and yet only sample hundreds of voters.
    	
	Unfortunately, the number of voters required to recover the asymptotic ranking or set of winners often scales, potentially exponentially, with the number of candidates~\cite{caragiannis_learning_2017}. Thus with many candidates, it is essential to use a voting mechanism that most efficiently elicits information from each voter. 
	
	In this work, we analyze \textit{positional scoring rules}~\cite{de1781memoire,young1975social}, mechanisms in which each position in each voter's personal ranking maps to a score given to the candidate that occupies that position. We focus on the special cases of such rules implied by $K$-Approval elicitation, in which each voter is asked to select their favorite $K$ candidates, as they the most commonly used such mechanisms in practice. Section~\ref{sec:model} formalizes our model. Then:
	
	\begin{description}
		\item  [Section~\ref{sec:learningrates}.] For a given election, we show how the particular scoring rule used affects the rate at which the final outcome (asymptotic in the number of voters) is learned. These rates, based on large deviation bounds, extend and tighten the results of~\citet{caragiannis_learning_2017}, and are precise enough to determine, for example, which of $3$-Approval and $4$-Approval is better in a particular context. We focus on the goals of learning both a ranking over all candidates and identifying a subset of winners.
		\item [Section~\ref{sec:randomization}.] Leveraging these rates, we study when randomization between scoring rules can improve learning, extending previous results to general positional scoring rules, the goal of selecting a set of winners, and arbitrary noise models. In particular, we find that randomizing between scoring rules can never speed up learning, for arbitrary noise models. This contrasts to the case when one is restricted to $K$-Approval mechanisms.
		\item [Section~\ref{sec:approvaltheory}.] For the Mallows model, we study how the optimal $K$ in $K$-Approval scales with the noise parameter, the number of candidates, and the number of winners desired. We find that, in contrast to  design choices made in practice, one should potentially ask voters to identify their favorite \textit{half} of candidates, even if the goal is to identify a \textit{single} winner.

		\item [Section~\ref{sec:data}.] We apply our approach to experimental ballots attached to real participatory budgeting elections across several US cities, as well as other ranking data from a host of domains. We find that the exact mechanism used matters: in one setting, for example, asking voters to identify their favorite candidate results in only a $80\%$ chance of identifying the best candidate after 400 voters, while asking voters for their favorite $2$ candidates identifies the same best candidate $99.9\%$ of the time. Extending our theoretical insights, we find that, historically across elections, $K$ has been set too low for effective learning. We further identify real-world examples in which randomization would have sped up learning. 
		
	\end{description}   
	
	Our work bridges a gap between coarse theoretical analyses of voting rules and the fine-grained design questions a practitioner wishes to answer. Proofs are in the Appendix. 
\makeatletter{}%
\section{Related work}
\label{sec:relatedwork}
Our work is part of several strands of research on mechanisms that elicit peoples' preferences. Aggregating voter rankings has a long history~\cite{de1781memoire,marquis1785essai,copeland1951reasonable,kemeny1959mathematics,young1988condorcet}.

\parbold{Learning properties of voting rules} Most related are works that study the learning properties of voting rules, assuming that a ``true'' ranking. One approach is to specify a noise model under which voter preferences are drawn (e.g., Mallows, Plackett-Luce) and then derive error rates by the number of voters for maximum likelihood or similar estimators under the model~\cite{maystre_fast_2015,zhao_learning_2016,lu_learning_2011,guiver_bayesian_2009,procaccia_is_2015,de2016minimising,chierichetti2014voting}. 

\citet{caragiannis_when_2013} ask similar questions to us: under what voter noise models do certain voting rules asymptotically recover the true underlying ranking, and how quickly do they do so. They define a class of voting rules and voter noise models under which a ``true'' ranking of candidates is eventually recovered. They further show that for a subset of this class (that does not contain positional scoring rules) and under the Mallows model, only a number of voters that is logarithmic in the number of candidates is required, where each voter provides a full ranking. \citet{lee2014crowdsourcing} develop an algorithm that can approximate the Borda rule, given a number of comparisons by each voter that is logarithmic in the number of candidates.

Most similar is that of~\citet{caragiannis_learning_2017}. They show that under the Mallows model, $K$-Approval with any fixed $K$ takes exponentially many voters (in the number of candidates) to recover the underlying ranking; on the other hand, $K$-approval with $K$ chosen uniformly at random for each voter takes only a polynomial number of voters.

These works provide order estimates for the learning rate, \textit{asymptotic in the number of candidates}; fine-grained differentiation between different rules or $K$-Approval mechanisms for a given election is not possible. We provide the latter and show that it matters.

  \parbold{Other approaches to comparing mechanisms} Many works take an axiomatic and computational approach, comparing mechanisms that may produce different outcomes even given asymptotically many votes~\cite{fishburn_bordas_1976,fishburn_axioms_1978,staring_two_1986,tataru_relationship_1997,wiseman_approval_2000,ratliff_startling_2003,elkind_properties_2015,aziz_computational_2015,caragiannis_subset_2017,aziz_justified_2017,lackner_consistent_2018,lackner_quantitative_2018,faliszewski_framework_2018}.~\citet{caragiannis2019optimizing} for example show how to find a scoring rule that most agrees with a given partial ground truth ranking. In contrast, we compare mechanisms' learning rates under a condition (formalized in Section~\ref{sec:consistency}) in which they produce the same asymptotic outcome.

  \citet{benade_efciency_2018}~and~\citet{gelauff_comparing_2018} experimentally compare different mechanisms across several dimensions, including ease of use and consistency with another mechanism; the latter leverages data from a participatory budgeting election at a university.

 \parbold{Large deviation analysis of elicitation mechanisms} Theoretically, we leverage large deviation rates and Chernoff bounds to derive how quickly a given scoring rule learns its outcome; see work of \citet{dembo_large_2010} for an introduction to large deviations. This work is thus conceptually similar to work on elicitation design for rating systems
 ~\cite{garg2018designing,garg2019designing}. In those works, the authors derive large deviation-based learning rates that depend on the questions that are asked to buyers as they review an item, where the goal is to accurately rank items; they further run an experiment on an online labor platform. In that setting, however, buyers rate a \textit{single} item, and mechanisms are distinct based on the \textit{behavior} they induce; in this work, voters see all the candidates and provide a partial ordering, and different designs (e.g., $3$-Approval vs $4$-Approval) \textit{constrain} the types of orderings voters can provide. %

\makeatletter{}%
\section{Model}
\label{sec:model}
We now present our model and a condition under which different positional scoring rules induce the same asymptotic outcome.

\subsection{Model primitives}

We begin with the model primitives: candidates and voters, the election goal, and elicitation and aggregation.   

\parbold{Candidates and Voters} There is a set of $M$ candidates $C = \{1,\dots,M\}$, typically indexed by $i,j\in C$. There are $N$ voters $V = \{1,\dots,N\}$. Each voter $v\in V$ has a strict ranking of candidates $\sigma_v$, drawn independently and identically from probability mass function over strict rankings $F(\sigma)$. Let $i \succ_\sigma j$ denote that $i$ is preferred over $j$ in $\sigma$, and $\sigma(i) = k$ denote that candidate $i$ is in the $k$th position in $\sigma$.

A special case for $F$ is the \textit{Mallows} model \cite{mallows1957non}, in which there is a ``true'' societal preference $\sigma^*$ from which each voter's ranking is a noisy sample. In particular, 
\[
F_\text{Mallows}(\sigma) \propto {\phi^{d(\sigma, \sigma^*)}}
\]
Where $d(\sigma, \sigma^*)$ is the Kendall's $\tau$ distance between rankings $\sigma, \sigma^*$, and $\phi \in [0,1]$ is the noise parameter: the smaller it is, the more concentrated $F$ is around $\sigma^*$.

\parbold{Election goal} We assume that the \textit{goal} $G$ is to divide the candidates into $T$  disjoint, ordered \textit{tiers} $G = \{C_1, \dots , C_T\}$, such that $C = \cup_{t=1}^T C_t$, where candidate $i \in C_{s}$ is deemed societally preferable over $j\in C_{t}$ if $s<t$. The size of each tier is fixed before the election. For example, recovering a strict ranking over all candidates corresponds to $G = \{C_1, \dots, C_M\}$, where $|C_t| = 1$. Alternatively, identifying a set of $W$ winners, without distinguishing amongst the winners, corresponds to $G = \{C_1, C_2\}$, with $|C_1| = W$.

In the main text and especially the empirics, we will focus on the task of selecting $W$ winners as it is the most common task in practice. However, this general notation allows comparison of the learning properties of different settings, and for example ask how much more expensive is it (in terms of the number of voters needed) to identify a strict ranking as opposed to just a set of winners.

\parbold{Elicitation and Aggregation}
Voters vote using an elicitation mechanism. Their votes are then aggregated using a positional scoring rule, parameterized as $\beta: \{1,\dots,M\} \mapsto \bbR$. We consider the following mechanisms:
\begin{description}
	\item [$K$-Ranking] Voter $v$ ranks her favorite $K$ candidates, i.e., reveals $\{(i, {\sigma_v}(i)) \,:\, {\sigma_v}(i) \leq K \}$. Candidate $i$ then receives a score $s_{iv} = \beta({\sigma_v}(i))$ if ranked, $0$ otherwise. For example, $\beta(k) = M-k$ for the Borda count.\footnote{In Borda, candidates not ranked receive a score $(M-K-1)/2$, consistent with assuming they are all tied in position $(K + 1)$.}
	
	\item [$K$-Approval] Voter $v$ selects her favorite $K$ candidates, i.e., reveals $\{i \,:\, {\sigma_v}(i) \leq K \}$. A candidate receives a score $s_{iv} = 1$ for being selected, $0$ otherwise. 

\end{description}

$\beta$ encodes both elicitation and aggregation. For example, $K$-Approval is equivalent to $K$-ranking with score function $\beta(k) = \bbI[k \leq K]$.  Furthermore, note that given $K$-ranking data, one can simulate $K'$-ranking elicitation for $K'\leq K$ with a $\beta$ s.t. $\beta(k) = 0$ for $k>K'$. %

The scoring rule $\beta$ is a design choice made by the election organizer, and so we will refer to $\beta$ as the election's \textit{design}. We restrict ourselves to non-constant, non-increasing scoring rules, i.e.,  $\beta \in \mB = \{\beta : \forall k<\ell \in {1,\dots,M}, \beta(k)\geq \beta(\ell), \text{ and } \exists k<\ell, \beta(k)> \beta(\ell) \}$.

\parbold{Outcome} After $N$ voters, candidate $i$'s cumulative score is $s_i^N = \frac{1}{N} \sum_{v=1}^N s_{iv}$. Candidates are ranked in descending order of score, to form ranking ${\sigma}^N$, with ties broken uniformly at random. We denote the \textit{outcome} after $N$ voters, corresponding to the goal $G$, as $O^N(M, F, \beta, G)$. For example, for the goal of selecting $W$ winners, $O^N(M, F, \beta, G)$ is simply the top $W$ candidates in ${\sigma}^N$. When $(M, F, \beta, G)$ is clear from context, we will refer to the outcome as $O^N$.

 As the number of voters $N\to\infty$, candidate scores $s_i^N \to \bbE_F[s_{iv}] \triangleq s_i$ by the law of large numbers; when such expected scores are distinct, i.e., $s_i \neq s_j$ for $i\neq j$, then $\sigma^N \to \sigma^*$ for some ranking $\sigma^*$. However, note that there may exist an asymptotic outcome $O^N \to O^*$ even without an asymptotic ranking $\sigma^N \to \sigma^*$, as long as expected scores $s_i$ and goal $G$ are such that candidates with identical expected scores are sorted into the same tier.  

\makeatletter{}%
\subsection{Asymptotic design invariance}
\label{sec:consistency}
The asymptotic outcome $O^*$ of an election may vary with the scoring rule $\beta$. For example, there may be a different winner if voters are asked to identify their favorite two candidates than if they identify their single favorite candidate, if the winner in the latter case is a polarizing candidate. As an axiomatic comparison between outcomes is out of the scope of this paper, we restrict our attention to cases where all ``reasonable'' choices of different $\beta$ asymptotically result in the same outcome (where ``reasonable'' corresponds to the set of scoring rules $\mB$ defined above).

\begin{definition}
	A setting $(M, F)$ is \textit{asymptotically design-invariant} for goal $G$ if any reasonable $\beta$ induces the same outcome asymptotically. $\exists O^*: \forall \beta \in \mB,$
	\[ \lim_{N \to \infty} O^N(M, F, \beta, G) = O^*, \text{ with probability } 1 \]
\end{definition}

Such design invariance only occurs under a fairly strong condition on the voter preference distribution: that the candidates can be separated into tiers (according to goal $G$) such that candidates in higher tiers are \textit{strictly} more likely to be ranked by a voter in the top $k$ positions, for all $k<M$, than are candidates in lower tiers. 

\begin{restatable}{proposition}{thmconsistency}
	\label{thm:consistency} 
	A setting $(M, F)$ for goal $G$ is asymptotically design-invariant if and only if there exist candidate tiers $O^* = \{C_1^*\dots C_T^*\}$ (corresponding to $G$) s.t. $\forall s<t$: $i \in C^*_s, j\in C^*_t \implies$ $\pr_F(\sigma_v(i) \leq k) > \pr_F(\sigma_v(j) \leq k)$, $\forall k\in\{1 \dots M-1 \}$. 
\end{restatable}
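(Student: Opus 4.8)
The plan is to reduce every rule in $\mB$ to a nonnegative combination of ``$k$-Approval'' rules via summation by parts, so that both directions become statements about the cumulative probabilities $p_i(k) \triangleq \pr_F(\sigma_v(i)\le k)$, $k\in\{1,\dots,M-1\}$. For $\beta\in\mB$ write $\Delta_k^\beta = \beta(k)-\beta(k+1)\ge 0$ and let $s_i^\beta = \bbE_F[\beta(\sigma_v(i))]$ be the limiting score of $i$ under $\beta$. A discrete Abel summation gives $s_i^\beta = \beta(M) + \sum_{k=1}^{M-1}\Delta_k^\beta\,p_i(k)$, so the constant cancels in differences:
\[ s_i^\beta - s_j^\beta \;=\; \sum_{k=1}^{M-1}\Delta_k^\beta\,\big(p_i(k)-p_j(k)\big). \]
Membership $\beta\in\mB$ is equivalent to $\Delta^\beta\ge 0$ with at least one strictly positive coordinate; in particular $k$-Approval corresponds to $\Delta^\beta=e_k$ for $k\in\{1,\dots,M-1\}$, and these rules lie in $\mB$.

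For the ``if'' direction, suppose the stated tiers $O^*=\{C_1^*,\dots,C_T^*\}$ exist. For any $\beta\in\mB$ and any $i\in C_s^*$, $j\in C_t^*$ with $s<t$, the displayed identity is a sum of nonnegative terms, and the coordinate $k$ at which $\Delta_k^\beta>0$ contributes $\Delta_k^\beta(p_i(k)-p_j(k))>0$ by hypothesis, so $s_i^\beta>s_j^\beta$. Let $\varepsilon$ be half the minimum of $s_i^\beta-s_j^\beta$ over the finitely many cross-tier pairs; by the strong law of large numbers, a.s.\ for all large $N$ we have $|s_i^N-s_i^\beta|<\varepsilon$ for every candidate $i$ simultaneously, whence every higher-tier candidate strictly outscores every lower-tier one, $\sigma^N$ sorts the candidates into exactly $C_1^*,\dots,C_T^*$, and $O^N=O^*$. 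Thus $(M,F)$ is asymptotically design-invariant with outcome $O^*$.

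For the ``only if'' direction, let $O^*=\{C_1^*,\dots,C_T^*\}$ be the common (goal-$G$-shaped) asymptotic outcome, fix $s<t$, $i\in C_s^*$, $j\in C_t^*$, and fix $k\in\{1,\dots,M-1\}$; run the election with the $k$-Approval rule, so $s_i^N-s_j^N=\frac1N\sum_{v}Y_v$ with $Y_v=\bbI[\sigma_v(i)\le k]-\bbI[\sigma_v(j)\le k]$ i.i.d., $\bbE Y_v=p_i(k)-p_j(k)$, and $Y_v\in\{-1,0,1\}$. First, $p_i(k)\ge p_j(k)$: otherwise $s_i^N<s_j^N$ eventually a.s., so $j$ precedes $i$ in $\sigma^N$, contradicting that $O^N=O^*$ eventually forces $i$ (in a strictly higher tier) to precede $j$. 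To upgrade to strict inequality, suppose $p_i(k)=p_j(k)$. If $Y_v$ is not a.s.\ $0$, then $\sum_{v\le N}Y_v$ is a lazy symmetric $\pm1$ walk, which is positive infinitely often and negative infinitely often a.s.; hence the relative order of $i$ and $j$ in $\sigma^N$ flips infinitely often and $O^N\ne O^*$ infinitely often, contradicting a.s.\ convergence. If instead $Y_v\equiv 0$ a.s., then $s_i^N=s_j^N$ for all $N$, so $i$ precedes $j$ in $\sigma^N$ only through uniform tie-breaking; hence $\pr(O^N=O^*)\le\pr(i\text{ precedes }j\text{ in }\sigma^N)=\tfrac12$ for every $N$, contradicting $\pr(O^N=O^*)\to1$. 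Therefore $p_i(k)>p_j(k)$ for all $k\in\{1,\dots,M-1\}$.

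The main obstacle is this last, strictness part of the ``only if'' direction: a naive argument only yields the weak inequality $p_i(k)\ge p_j(k)$, and ruling out equality requires showing that an exact cross-tier tie $p_i(k)=p_j(k)$ destroys almost-sure convergence, which splits into the ``random-walk recurrence'' case (indicators differ with positive probability) and the ``pure tie-breaking'' case (indicators coincide a.s.). The summation-by-parts reduction and the ``if'' direction are routine by comparison.
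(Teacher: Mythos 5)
Your proof is correct and takes essentially the same route as the paper: Abel summation to write $s_i^\beta-s_j^\beta=\sum_k\Delta_k^\beta(p_i(k)-p_j(k))$, the strong law of large numbers for the ``if'' direction, and $k$-Approval rules as witnesses for the ``only if'' direction. In fact your treatment of the tie case $p_i(k)=p_j(k)$ (splitting into the recurrent-random-walk and pure tie-breaking subcases) is more careful than the paper's, which simply asserts that the limit then differs from $O^*$ with positive probability.
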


Note that this condition is stronger than stochastic dominance as the inequality is strict for \textit{every} position $k$. 

 This proposition connects to~\citet{caragiannis_when_2013} as follows: they prove that many rules (including all positional scoring rules and the Bucklin rule) asymptotically recover the base ranking $\sigma^*$ of a generalization of the Mallows model in which the probability $F(\sigma)$ of a ranking $\sigma$ is monotonic in the distance $d(\sigma, \sigma^*)$, where distance function $d$ is itself in some general class that contains the Kendall's $\tau$ distance. Their results directly imply that such noise models, including the standard Mallows model, are asymptotically design-invariant for any goal $G$. 

However, for goals $G$ where recovering a full ranking is unnecessary, the condition in Proposition~\ref{thm:consistency} is weaker than the assumptions of~\citet{caragiannis_when_2013}; there need not even be a single base ranking $\sigma^*$. For example, when $G$ such that we wish to select a set of $W$ winners, $F$ corresponding to a mixture of Mallows models -- with all possible permutations of the $W$ candidates in the top $W$ positions in the base rankings -- would still be design-invariant. Constructing a general class of ranking noise models that satisfies this property is an avenue for future work. 

Assuming asymptotic design-invariance on voter preferences $F$ may seem restrictive. However, absent axioms -- that are precise enough for design purposes --  to prefer one scoring rule $\beta$ over another, the assumption allows us to proceed in a principled manner. We believe it is unlikely that such precise, satisfactory axioms exist generally. In the Appendix, we provide a simple example (similar to that of \citet{staring_two_1986}) where $1$-Approval and $2$-Approval select disjoint sets of $2$-Winners, and such examples can be adapted more generally to selecting $W$ winners from either $K$-Approvals or $K'$-Approvals. In participatory budgeting with the goal of identifying 6-10 winning projects out of over twenty projects, it is unclear whether 
there is a principled reason to prefer $4$-Approval over $8$-Approval. However, such axioms would be an interesting avenue for future work. 

Furthermore, in Section~\ref{sec:modelvalidation} we show that design invariance is often approximately satisfied in practice, especially for identifying a small set of winners, using data from a wide range of participatory budgeting and other elections.

\makeatletter{}%
\section{Learning Rates and Optimal Design}
\label{sec:learningrates}

Different elicitation and aggregation mechanisms may take different amounts of voters to learn the asymptotic outcome. For example, suppose we want to identify the worst candidate out of 100, where the voter's rankings are drawn from a Mallows model with $\phi > 0$. Then, asking each voter to identify their single favorite candidate will eventually identify the worst candidate, but after many more voters than if we ask each voter to identify their least favorite candidate. We make such learning rates precise in this section. Our results in this section extend those of~\citet{caragiannis_learning_2017} as discussed above, both to arbitrary positional scoring rules and by providing tighter bounds for how a scoring rule affects the convergence rate. These rates are precise enough to \textit{design} scoring rules, for example comparing $4$-Approval and $8$-Approval in the above example.

\subsection{Learning rates}

We begin by deriving rates for how quickly a given positional scoring rule $\beta$ learns its asymptotic outcome $O^*$ (given it exists), as a function of the voter preference model $F$. %
In particular, we use \textit{large deviation rates} at which a scoring rule learns~\cite{dembo_large_2010}. 
\begin{definition}
	Consider a sequence $\{A_N \geq 0\}_{N \in \bbN}$, where $A_N \to 0$ . Value $r>0$ is the \textit{large deviation rate} for $A_N$ if
	\begin{align*}
	r = -\lim_{N\to\infty}\frac{1}{N} \log A_N
	\end{align*}
\end{definition} When $r>0$ exists, $A_N \to 0$ exponentially fast, with exponent $r$ asymptotically, i.e., $A_N$ is $e^{-rN \pm o(N)}$. These rates provide us both upper and lower bounds for the probability of an error or the number of errors in an outcome after $N$ voters, up to polynomial factors. In particular, in the propositions below, we will calculate the large deviation rate of errors in the outcome. We will also then provide (loose) upper bounds for such errors after $N$ voters that hold without any missing polynomial factors, for any $N$. These upper bounds are equivalent to Chernoff bounds. 

The particular forms for these rates, derived below for general noise models $F$, may seem complex. However, they are useful  both for theoreticians and practitioners. For example, in Section~\ref{sec:randomization}, we use the structure of such rates to resolve open questions regarding when randomization between mechanisms can help learn the outcome from votes drawn from an \textit{arbitrary} noise model. In Section~\ref{sec:data}, we show that learning rates -- even when empirically calculated -- reflect the true behavior of errors in real elections with a small number of voters; we then use empirically calculated learning rates to draw design insights across elections.

\parbold{Rates for separating two candidates} We now derive the large deviation learning rates for recovering the true ordering between a pair of candidates $i,j$, given noise model $F$. These rates will directly translate to the learning rate for the overall election, given some goal $G$.

\begin{restatable}{proposition}{thmpairwiselearning}
	\label{thm:pairwiselearning}
	Fix scoring rule $\beta\in\mB$, voter distribution $F$, and consider candidates $i,j$ such that $s_i > s_j$. Then, the probability of making a mistake in ranking these two candidates after $N$ voters, $\pr(\sigma^N(i)>\sigma^N(j))$, goes to zero with large deviation rate
	\[r_{ij}(\beta) = -\inf_{z\in\bbR} \log \bbE_F\left[\exp\left(z\left(\beta(\sigma_v(i)) - \beta(\sigma_v(j))\right)\right)\right]\] 
	Further, the following upper bound holds for any $N$. 
	\[\pr(\sigma^N(i)>\sigma^N(j)) \leq \exp(-r_{ij}(\beta)N)\]
\end{restatable}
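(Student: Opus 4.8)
The plan is to reduce the statement to a one-dimensional large-deviation estimate for a sum of i.i.d.\ random variables and then invoke Cram\'er's theorem \cite{dembo_large_2010} together with an elementary Chernoff bound. For each voter $v$, set $X_v = \beta(\sigma_v(i)) - \beta(\sigma_v(j))$. Since the rankings $\sigma_v$ are drawn i.i.d.\ from $F$ and $\beta$ has finite range, the $X_v$ are i.i.d.\ and bounded, with mean $\bbE_F[X_v] = s_i - s_j > 0$ and finite log-moment generating function $\Lambda(z) := \log\bbE_F[e^{zX_v}]$ for all $z\in\bbR$. Because candidates are ordered by descending cumulative score with ties broken uniformly, for the pair $i,j$ we have $\frac12\pr(\bar X_N \le 0)\le \pr(\sigma^N(i)>\sigma^N(j))\le \pr(\bar X_N\le 0)$, where $\bar X_N = \frac1N\sum_{v=1}^N X_v$; the factor $\frac12$ is irrelevant to the exponential rate, so it suffices to analyze $\pr(\bar X_N \le 0)$.

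For the non-asymptotic upper bound I would use the Chernoff bound: for any $z\le 0$, on the event $\{\sum_v X_v\le 0\}$ we have $e^{z\sum_v X_v}\ge 1$, so Markov's inequality gives $\pr(\bar X_N\le 0)\le \bbE_F[e^{zX_v}]^N$. Optimizing over $z$ and noting that the convex map $z\mapsto\bbE_F[e^{zX_v}]$ has positive derivative $\bbE_F[X_v]$ at $z=0$ --- so its infimum over $z\le 0$ equals its infimum over all of $\bbR$ --- yields $\pr(\bar X_N\le 0)\le\big(\inf_{z\in\bbR}\bbE_F[e^{zX_v}]\big)^N = \exp\!\big(-r_{ij}(\beta)N\big)$, using that $\log$ commutes with $\inf$. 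Together with the sandwich above, this gives the stated bound $\pr(\sigma^N(i)>\sigma^N(j))\le\exp(-r_{ij}(\beta)N)$ for every $N$.

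For the matching large-deviation rate I would apply Cram\'er's theorem to $\bar X_N$, with rate function $I(x) = \sup_{z\in\bbR}(zx - \Lambda(z))$. The closed-set bound gives $\limsup_N \frac1N\log\pr(\bar X_N\le 0)\le -\inf_{x\le 0}I(x) = -I(0)$, the equality using that $I$ is convex and minimized at $\bbE_F[X_v]>0$, hence non-increasing on $(-\infty,0]$. Conversely, the open-set bound gives $\liminf_N\frac1N\log\pr(\bar X_N<0)\ge -\inf_{x<0}I(x)$; since $X_v$ is bounded with positive mean, $0$ lies in the interior of the domain of $I$ in the generic case $\pr(X_v<0)>0$, so $I$ is continuous there and $\inf_{x<0}I(x)=I(0)$. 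Finally $I(0)=\sup_{z}(-\Lambda(z)) = -\inf_z\Lambda(z) = r_{ij}(\beta)$, so the limit exists and equals $r_{ij}(\beta)$, as claimed.

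The main obstacle is the careful treatment of ties and of the boundary case. Ties force the two-sided sandwiching of the mistake event between $\{\bar X_N<0\}$ and $\{\bar X_N\le 0\}$, and the mismatch between the half-lines used for the closed-set upper bound and the open-set lower bound must be reconciled. In particular, when $\pr(X_v<0)=0$ (so errors arise only through ties), $0$ sits on the boundary of the support and the open-set lower bound degenerates; here one instead checks directly that $\pr(\bar X_N\le 0)=\pr(X_v=0)^N$ while $\Lambda$ is non-decreasing with $\inf_z\Lambda(z)=\log\pr(X_v=0)$, so the closed-form rate still matches. Verifying that these edge cases agree with the stated formula is the only delicate point; the rest is a routine application of Cram\'er's theorem and the Chernoff bound.
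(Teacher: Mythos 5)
Your proof is correct and follows essentially the same route as the paper's: reduce to the i.i.d.\ bounded sum $X_v=\beta(\sigma_v(i))-\beta(\sigma_v(j))$ with positive mean, invoke Cram\'er's theorem for the rate and a Chernoff bound for the non-asymptotic inequality. You are in fact somewhat more careful than the paper, which cites ``basic large deviation bounds'' for the matching lower bound and does not explicitly treat ties or the degenerate case $\pr(X_v<0)=0$; your handling of those points is sound.
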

The proof follows directly from writing a random variable for the event of making a mistake after $N$ voters and then applying known large deviation rates. This simplicity emerges because positional scoring rules are additive across voters.

The proposition establishes that -- for a fixed number of candidates $M$ and voter noise model $F$ -- the probability of making a mistake on any single pair of candidates $i,j$ decreases exponentially with the number of voters, at a rate governed by the scoring rule $\beta$ and the candidates' relative probabilities of appearing at each position of a voter's preference ranking. The rate $r_{ij}(\beta)$  is non-negative, and and larger values correspond to faster learning of the relative ranking of $i,j$. Note that for notational convenience, we suppress $F$ in the argument for the rate. 

 For general $\beta$, we cannot find a closed form for $r_{ij}(\beta)$. However, the structure of this rate, in particular that of the argument in the $\log(\cdot)$, will directly let us show that randomization cannot help learning outcomes among positional scoring rules, for \textit{arbitrary} noise models $F$.

  For $K$-Approval voting, further, the rate simplifies.

\begin{restatable}{proposition}{lempairwiseapproval}
	\label{lem:pairwiselearning_approval}
	Consider $\beta$ consistent with $K$-Approval voting for some fixed $K$, and candidates $i,j$ such that $s_i > s_j$. Then the large deviation rate $r_{ij}(\beta)$ in Proposition~\ref{thm:pairwiselearning} is
	\begin{align*}
	r_{ij}(K) &= -\log\left( 2\sqrt{t_{ij}^i(K)t_{ij}^j(K)} + 1 - t_{ij}^i(K) - t_{ij}^j(K) \right)
	\end{align*}
	Where $t^i_{ij}(K) \triangleq \pr_F(\sigma_v(i) \leq K, \sigma_v(j) > K)$, i.e., the probability that a voter approves $i$ but not $j$.
\end{restatable}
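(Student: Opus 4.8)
The plan is to substitute the $K$-Approval scoring rule into the general rate formula of Proposition~\ref{thm:pairwiselearning} and carry out the resulting one-dimensional optimization explicitly. For $K$-Approval, $\beta(k) = \bbI[k \leq K]$, so the per-voter random variable $X_v \triangleq \beta(\sigma_v(i)) - \beta(\sigma_v(j))$ takes only the values in $\{-1,0,1\}$: it equals $+1$ exactly when the voter approves $i$ but not $j$ (probability $t^i_{ij}(K)$), equals $-1$ exactly when the voter approves $j$ but not $i$ (probability $t^j_{ij}(K)$), and equals $0$ otherwise (probability $1 - t^i_{ij}(K) - t^j_{ij}(K)$, which pools the cases where both or neither candidate is approved). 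Writing $a = t^i_{ij}(K)$, $b = t^j_{ij}(K)$, and $c = 1 - a - b$ for brevity, the moment generating function inside the rate is therefore $\bbE_F[e^{z X_v}] = a e^{z} + b e^{-z} + c$.

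The next step is to minimize $g(z) = a e^{z} + b e^{-z} + c$ over $z \in \bbR$. Assuming $a,b>0$, the function is strictly convex (indeed $g''(z) = a e^{z} + b e^{-z} > 0$), so its unique minimizer solves $g'(z) = a e^{z} - b e^{-z} = 0$, i.e.\ $e^{z^\star} = \sqrt{b/a}$. Substituting back gives $g(z^\star) = a\sqrt{b/a} + b\sqrt{a/b} + c = 2\sqrt{ab} + c = 2\sqrt{t^i_{ij}(K)\, t^j_{ij}(K)} + 1 - t^i_{ij}(K) - t^j_{ij}(K)$. Since $\log(\cdot)$ is monotone, $\inf_z \log g(z) = \log g(z^\star)$, so plugging into $r_{ij}(\beta) = -\inf_z \log \bbE_F[e^{zX_v}]$ yields the claimed expression for $r_{ij}(K)$.

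What remains is to dispose of the degenerate cases, which I expect to be the only slightly delicate point; the rest is a routine computation. The hypothesis $s_i > s_j$ forces $a > b$, because for $K$-Approval $s_i - s_j = \pr_F(\sigma_v(i) \leq K) - \pr_F(\sigma_v(j) \leq K) = a - b$; in particular $a>0$. If $b = 0$, then $g$ is monotone and its infimum $c = 1 - a$ is approached as $z \to -\infty$ rather than attained, but this coincides with the stated formula evaluated at $b=0$, so the expression for $r_{ij}(K)$ remains correct as written. Finally, one checks the argument of the logarithm is admissible: $2\sqrt{ab} + 1 - a - b = 1 - (\sqrt{a} - \sqrt{b})^2 \in [0,1]$, using that $(\sqrt{a}-\sqrt{b})^2 \leq a + b \leq 1$ since the two events ``approve $i$ not $j$'' and ``approve $j$ not $i$'' are disjoint; this also re-confirms $r_{ij}(K) \geq 0$, as required.
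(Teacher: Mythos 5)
Your proposal is correct and follows essentially the same route as the paper's proof: reduce $A_v = \beta(\sigma_v(i)) - \beta(\sigma_v(j))$ to a $\{-1,0,1\}$-valued variable with masses $t^i_{ij}(K)$, $t^j_{ij}(K)$, $1 - t^i_{ij}(K) - t^j_{ij}(K)$, and minimize the convex moment generating function at $e^{z^\star} = \sqrt{t^j_{ij}(K)/t^i_{ij}(K)}$. Your extra care with the degenerate case $t^j_{ij}(K)=0$ and the check that the argument of the logarithm lies in $[0,1]$ go slightly beyond what the paper records, but do not change the argument.
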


The proof follows directly from the structure of $\beta$ for $K$-Approval, $\beta(k) = \bbI[k\leq K]$; for each pair of candidates, the sufficient statistics are how often each candidate appears in a voter's top $K$ list but the other candidate does not. 

We overload notation and use $K$ directly in the argument for $r_{ij}(K)$. This rate function $r_{ij}(K)$ is convex in the probabilities $t^i_{ij}(K), t^j_{ij}(K)$; this fact will let us show that randomization, even among $K$-Approval mechanisms, cannot help learning the relationship of any pair of candidates. 

\parbold{Rates for learning the outcome}
In general, the rates at which one learns each pair of candidates immediately translate to rates for learning the entire outcome $O^*$. 

\begin{restatable}{proposition}{thmgoallearning}
	\label{thm:goal_learning}
	Consider goal $G$ and $\beta\in\mB$ such that $O^N \to O^*$. Let $Q^N$ be the expected number of errors in the outcome after $N$ voters, $\sum_{i\in C_s^*,j\in C_t^*, s < t} \pr(\sigma^N(i)>\sigma^N(j))$. Then $Q^N$ goes to zero with large deviation rate
	\[
	r(\beta) = \min_{i\in C_s^*,j\in C_t^*, s < t} r_{ij}(\beta) \label{eqnpart:outcomerate}
	\]
	Further, the following upper bound holds for any $N$. \[Q^N \leq M^2 \exp(-rN)\] %
\end{restatable}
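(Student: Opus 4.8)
The plan is to reduce this outcome-level statement to the pairwise statement of Proposition~\ref{thm:pairwiselearning}. By its very definition, $Q^N$ is a sum of at most $\binom{M}{2}<M^2$ pairwise error probabilities $\pr(\sigma^N(i)>\sigma^N(j))$, one per pair of candidates placed in distinct tiers of $O^*$. So, after controlling each such term, a union bound will give the finite-$N$ upper bound together with the $\liminf$ half of the rate, while retaining a single worst term will give the $\limsup$ half.

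The one step that requires genuine care is checking that Proposition~\ref{thm:pairwiselearning} actually applies to every summand, i.e.\ that $s_i>s_j$ whenever $i\in C^*_s$, $j\in C^*_t$ with $s<t$. This is forced by the hypothesis $O^N\to O^*$: if $s_i<s_j$ then the law of large numbers eventually ranks $j$ above $i$ in $\sigma^N$, contradicting convergence to $O^*$; and if $s_i=s_j$ then the partial sums $\sum_{v}(s_{iv}-s_{jv})$ form a mean-zero i.i.d.\ process that changes sign infinitely often almost surely (or, if $s_{iv}=s_{jv}$ almost surely, the random tie-break flips the pair infinitely often), so the tier of $i$ relative to $j$ in $O^N$ never stabilizes --- again a contradiction. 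Hence $s_i>s_j$, and Proposition~\ref{thm:pairwiselearning} yields, for each such pair, both $\pr(\sigma^N(i)>\sigma^N(j))\le e^{-r_{ij}(\beta)N}$ for every $N$ and $-\frac{1}{N}\log\pr(\sigma^N(i)>\sigma^N(j))\to r_{ij}(\beta)$; in particular every summand vanishes, so $Q^N\to0$ and its large-deviation rate is well defined.

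Writing $r(\beta)\triangleq\min_{ij}r_{ij}(\beta)$ and summing the per-pair Chernoff bounds over the fewer than $M^2$ relevant pairs gives
\[
  Q^N \;=\; \sum_{\substack{i\in C^*_s,\ j\in C^*_t\\ s<t}} \pr\!\big(\sigma^N(i)>\sigma^N(j)\big)
  \;\le\; \sum_{\substack{i\in C^*_s,\ j\in C^*_t\\ s<t}} e^{-r_{ij}(\beta)N}
  \;\le\; M^2\, e^{-r(\beta)N},
\]
which is the claimed finite-$N$ bound and, since $\frac{2\log M}{N}\to0$, also gives $\liminf_{N\to\infty}\big(-\frac{1}{N}\log Q^N\big)\ge r(\beta)$. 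For the opposite direction, every summand is nonnegative, so $Q^N\ge\pr(\sigma^N(i_0)>\sigma^N(j_0))$ for a pair $(i_0,j_0)$ attaining $r_{i_0j_0}(\beta)=r(\beta)$, and Proposition~\ref{thm:pairwiselearning} gives $-\frac{1}{N}\log\pr(\sigma^N(i_0)>\sigma^N(j_0))\to r(\beta)$, hence $\limsup_{N\to\infty}\big(-\frac{1}{N}\log Q^N\big)\le r(\beta)$. Combining the two bounds yields $-\lim_{N\to\infty}\frac{1}{N}\log Q^N=r(\beta)$. Beyond the score-separation observation --- the only non-mechanical point, and thus the main obstacle, such as it is --- the argument is pure bookkeeping; the polynomial $M^2$ factor lost in the union bound is exactly the kind of sub-exponential prefactor that large-deviation rates ignore.
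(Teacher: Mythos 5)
Your proof is correct and follows essentially the same route as the paper: decompose $Q^N$ into cross-tier pairwise error probabilities, apply Proposition~\ref{thm:pairwiselearning} with a union bound for the finite-$N$ upper bound, and identify the large deviation rate of the sum with the minimum pairwise rate (you establish this last step by the two-sided sandwich argument, whereas the paper cites Lemma~1.2.15 of \citet{dembo_large_2010}, but these are the same fact). Your explicit verification that $s_i>s_j$ for every cross-tier pair --- so that Proposition~\ref{thm:pairwiselearning} actually applies --- is a small addition of rigor that the paper leaves implicit.
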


The large deviation rate $r(\beta)$ thus provides a tight characterization for how many voters it takes to (with high confidence) recover the asymptotic outcome of an election. Note that the goal plays an important role: for selecting $W$ winners, for example, it is not important to learn the exact relationship among candidates $\{1, \dots, W \}$, speeding up outcome learning. Design $\beta$ also matters; e.g., even amongst approval voting mechanisms, $K=1$ vs $K=5$ will produce substantially different $t_{ij}^i(K)$. To derive learning rates for $K$-Approval for any given noise model or using real-world data, one simply needs to calculate these values. We do so numerically for the Mallows model and empirically with real world data in Sections~\ref{sec:approvaltheory} and~\ref{sec:data}, respectively.

\subsection{Optimal design and discussion}
Now that we can quantify how quickly a given scoring rule $\beta$ learns its asymptotic outcome, we apply our framework to \textit{designing} elections, i.e., choosing an optimal scoring rule $\beta$. For the rest of this work, we assume that the setting $(M,F)$ is asymptotically design-invariant for the goal $G$, i.e., there exists an outcome that is asymptotically induced by every reasonable scoring rule. Then, the design of an election $\beta$ only affects the \textit{rate} at which the election converges to the asymptotic outcome $O^*$, as calculated above. With no other constraints, then, the design challenge is simple: find the rate optimal $\beta$.
\begin{definition}
	A scoring rule $\beta^* \in \mB$ is \textit{rate optimal} if it maximizes the rate in Proposition~\ref{thm:goal_learning}. %
	$K^*$-Approval is \textit{Approval rate optimal} if it maximizes the rate among $K$-Approval mechanisms.
\end{definition}

Rate optimal designs $\beta$ learn the outcome faster than others in the number of voters, and so are preferable to other designs. What influences how quickly a design $\beta$ learns? 

$\bbE_F\left[\exp\left(z\left(\beta(\sigma_v(i)) - \beta(\sigma_v(j))\right)\right)\right]$ must be small (near zero) for negative $z$, and so $\beta(k) - \beta(k')$ must be large when $\pr(\sigma_v(i) = k, \sigma_v(j) = k')$ is large. In other words, a scoring rule must reward a candidate achieving a position in a voter's ranking that is only achieved by asymptotically high-ranking candidates. For example, if it is common for worse candidates to be ranked second in a given voter's ranking but not to be ranked first, then $\beta(1)\gg \beta(2)$ would be beneficial.

Note that finding such designs requires knowledge of the voter noise model $F$, which in many settings may not be available before the election. However, next in Sections~\ref{sec:theoreticalinsights}~and~\ref{sec:data}, we show that there are valuable insights that apply \textit{across} elections, including how our approach has informed participatory budgeting deployments. %

\makeatletter{}%

\section{Theoretical Design Insights}
\label{sec:theoreticalinsights}
\begin{figure*}
	\centering
	\begin{subfigure}[b]{.56\textwidth}
		\includegraphics[width=\linewidth]{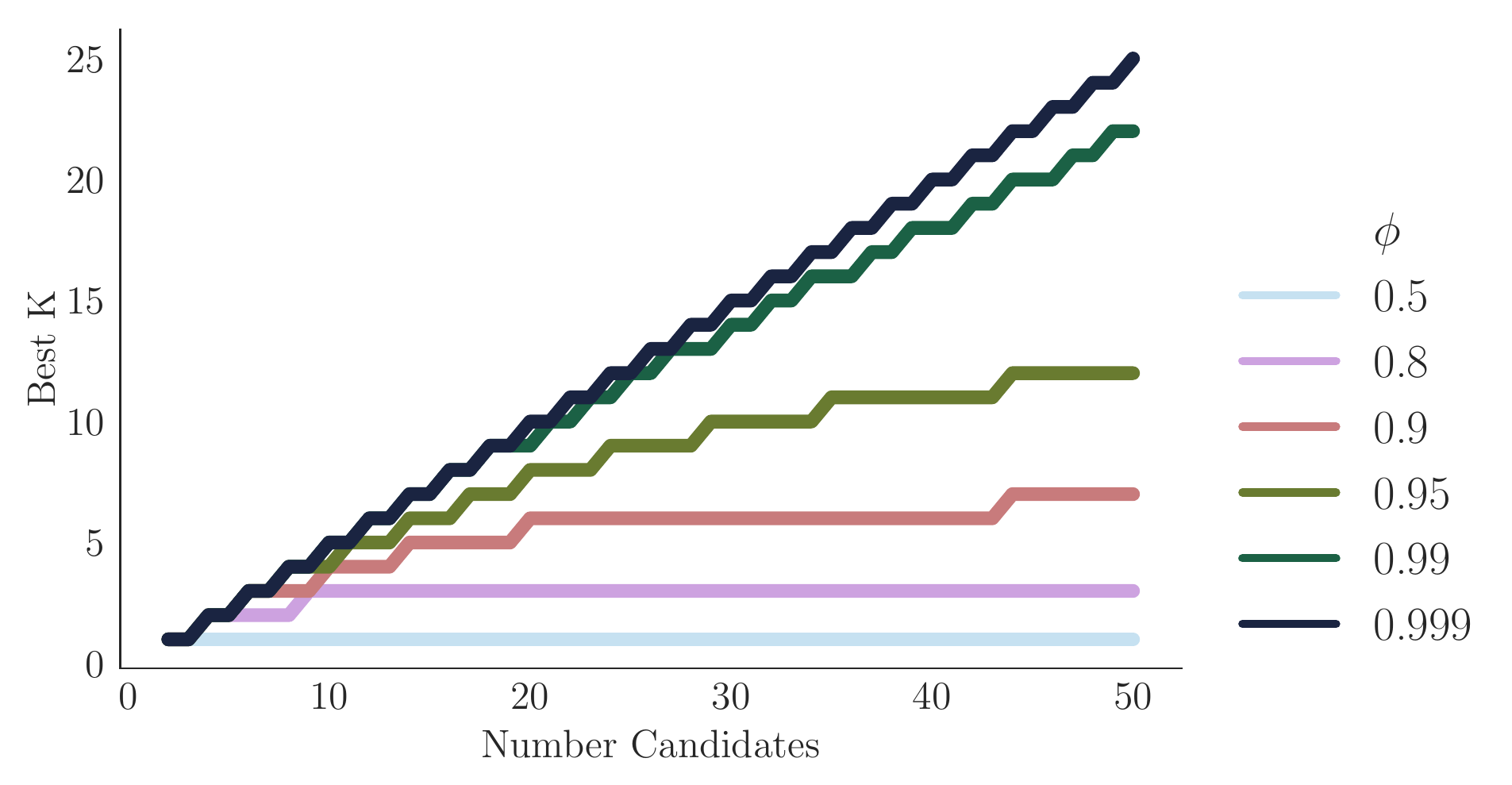}\vfill
		\caption{For selecting $W=1$ winner as number of candidates vary.}
		\label{fig:mallowsnumcandid}
	\end{subfigure}\hfill
	\begin{subfigure}[b]{.44\textwidth}
		\includegraphics[width=\linewidth]{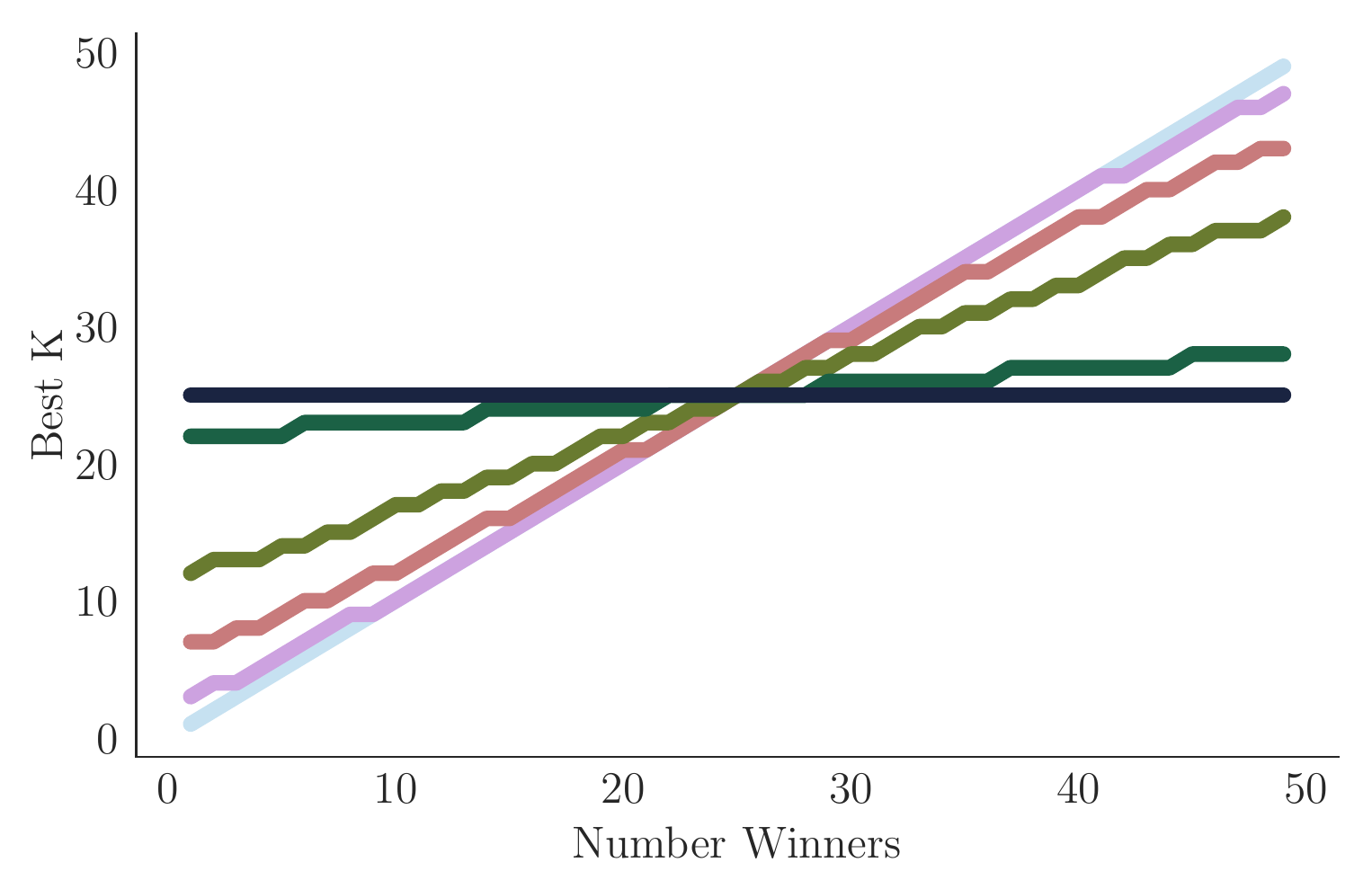}\vfill
		\caption{For $M=50$ candidates as number of winners vary.}
		\label{fig:mallowsnumwinners}
	\end{subfigure}\hfill
	\caption{$K$-Approval rate optimal mechanism for the Mallows model as $\phi$, number of candidates, and number of winners vary. }
	\label{fig:mallowsplots}
\end{figure*}
The learning rates derived in the previous section provide election design insights, even before our approach is applied to real-world data. In particular, in this section, we first extend the previous literature on the (potential) benefits of randomizing between mechanisms. Then, we study the task of selecting $W$ winners using $K$-Approval voting.

\subsection{When does randomization help?} 
\label{sec:randomization}

We now consider the question of whether \textit{randomizing} between mechanisms in an election may speed up learning. By randomization, we mean: consider a set of scoring rules $B = \{\beta_1, \dots, \beta_P\} \subseteq \mB$; elicitation and aggregation for a given voter is done according to a scoring rule picked at random from $B$, where $\beta_p$ is selected with probability $d_p$.

Note that the learning rate of such randomized schemes can be calculated as before, by summing across $\beta_p$ inside the $\bbE[\cdot]$ of $r_{ij}(\beta)$ or -- for $B$ consisting only of $K$-Approval votes -- directly through the resulting probability that the voter approves $i$ but not $j$. We use $r_{ij}(B,D), r(B,D)$ to denote the candidate pairwise and overall outcome learning rates, respectively, for randomized mechanism $(B,D)$, where $B = \{\beta_1, \dots, \beta_P\} \subseteq \mB$ and $D = \{d_1, \dots, d_P \}$. 

It is known that in some settings randomization improves learning, asymptotically in the number of candidates.  \citet{caragiannis_learning_2017} provide an example in which randomizing uniformly between all possible $K$-Approval mechanisms outperforms any static $K$-Approval elicitation, when the goal is to rank all the candidates. Their insight is that, under the Mallows model and under a fixed $K$, either the first two candidates will be hard to distinguish from each other, or the last two will, and randomizing between mechanisms balances learning each pair. 

We now study randomization for the goal of selecting $W$ winners and for arbitrary positional scoring rules and voter noise models.
Our first result is that randomizing between scoring rules does not help, for \textit{any} voter noise model, in contrast to the case when restricted to approval votes.

\begin{restatable}{theorem}{lemrandomizebetterscoring}
	\label{lem:randomizebetterscoring}
	Randomization does not improve the outcome learning rate for any asymptotically design-invariant noise model $F$ or goal $G$.  For any randomized scoring rule mechanism $(B,D)$, where $B\subset \mB$, for any $F$, $G$, the scoring rule $\beta^*(k) = \sum_{p} d_p \beta_p(k)$ satisfies $r(\beta^*) \geq r(B,D)$.

\end{restatable}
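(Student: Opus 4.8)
The plan is to reduce the theorem to a pairwise statement and then prove a moment‑generating‑function domination by convexity of the exponential.

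\textbf{Reduction.} By Proposition~\ref{thm:goal_learning}, both $r(\beta^*) = \min_{i\in C_s^*, j\in C_t^*, s<t} r_{ij}(\beta^*)$ and $r(B,D) = \min_{i\in C_s^*, j\in C_t^*, s<t} r_{ij}(B,D)$, the minima being over the same family of cross‑tier pairs. Since $\min$ is monotone, it suffices to show $r_{ij}(\beta^*) \geq r_{ij}(B,D)$ for each such pair. First I would clear the bookkeeping: after discarding any $\beta_p$ with $d_p = 0$, the rule $\beta^*(k) = \sum_p d_p \beta_p(k)$ is non‑increasing (a convex combination of non‑increasing functions) and non‑constant (at least one $\beta_p$ is non‑constant and has positive weight), so $\beta^* \in \mB$. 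Asymptotic design‑invariance of $(M,F)$ for $G$ then guarantees $\beta^*$ induces the same outcome $O^*$, so $r_{ij}(\beta^*)$ is indeed the relevant rate; moreover, summing by parts against the strict CDF inequalities $\pr_F(\sigma_v(i)\leq k) > \pr_F(\sigma_v(j)\leq k)$ from Proposition~\ref{thm:consistency} gives $s_i > s_j$ under $\beta^*$, as required for Proposition~\ref{thm:pairwiselearning} to apply.

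\textbf{MGF domination.} Fix a cross‑tier pair $i,j$ and let $\Delta_p$ denote the per‑voter score gap $\beta_p(\sigma_v(i)) - \beta_p(\sigma_v(j))$ under rule $\beta_p$. For the randomized mechanism the per‑voter gap has moment generating function $\bbE[\exp(z(\cdot))] = \sum_p d_p\, \bbE_F[\exp(z\Delta_p)]$ (the rule is drawn from $D$ independently of $\sigma_v$), whereas under $\beta^*$ the per‑voter gap equals $\sum_p d_p \Delta_p$. Fix $z \in \bbR$. Pointwise in the realized ranking $\sigma_v$, Jensen's inequality for the convex map $x \mapsto e^x$, applied to the probability vector $D$, gives $\exp(z\sum_p d_p \Delta_p) \leq \sum_p d_p \exp(z\Delta_p)$; taking $\bbE_F$ preserves this, so $\bbE_F[\exp(z(\beta^*(\sigma_v(i)) - \beta^*(\sigma_v(j))))] \leq \sum_p d_p\, \bbE_F[\exp(z\Delta_p)]$ for \emph{every} $z$.

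\textbf{Conclusion.} Applying the (monotone) logarithm and then $\inf_{z\in\bbR}$ to both sides of the previous display yields $\inf_z \log \bbE_F[\exp(z(\beta^*(\sigma_v(i)) - \beta^*(\sigma_v(j))))] \leq \inf_z \log(\sum_p d_p\, \bbE_F[\exp(z\Delta_p)])$; negating gives $r_{ij}(\beta^*) \geq r_{ij}(B,D)$. Finally, evaluating $\min_{i,j} r_{ij}(\beta^*)$ at the pair that attains $\min_{i,j} r_{ij}(B,D)$ shows $r(\beta^*) \geq r(B,D)$. I do not expect a deep obstacle here; the only real insight is recognizing that the correct convexity statement is Jensen for $\exp$ over the mixing distribution $D$, applied pointwise in $\sigma_v$ — this is exactly what makes the MGF inequality hold simultaneously for all $z$, so that it survives the $\inf_z$. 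The only genuine care required is the routine verification that $\beta^* \in \mB$ after pruning zero‑weight rules and that design‑invariance is inherited, so that both rates are computed against the same tiers $O^*$.
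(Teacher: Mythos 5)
Your proposal is correct and follows the same overall strategy as the paper: reduce to the pairwise rates via Proposition~\ref{thm:goal_learning}, show that for every $z$ the moment generating function of the per-voter score gap under $\beta^*$ is dominated by the mixture $\sum_p d_p\,\bbE_F[\exp(z\Delta_p)]$, and then push the inequality through $\inf_z$, $\log$, and the negation, noting that the \emph{same} $\beta^*$ works for every cross-tier pair simultaneously so the minimum over pairs is preserved. The one place you genuinely diverge is in how the central convexity fact is established: the paper defines $\gamma(\beta(1),\dots,\beta(M)) = \bbE_F\bigl[\exp\bigl(z(\beta(\sigma_v(i))-\beta(\sigma_v(j)))\bigr)\bigr]$, computes its full Hessian in the vector $(\beta(1),\dots,\beta(M))$, and verifies diagonal dominance to conclude positive semidefiniteness, whereas you observe that pointwise in the realized ranking $\sigma_v$ the gap is a linear functional of $\beta$, so Jensen's inequality for $x\mapsto e^x$ over the mixing weights $D$ gives $\exp(z\sum_p d_p\Delta_p)\le\sum_p d_p\exp(z\Delta_p)$ directly, and taking $\bbE_F$ finishes it. Your route is shorter and arguably more transparent about \emph{why} the result holds (the MGF is a mixture of exponentials of linear functionals, hence convex in $\beta$), while the paper's Hessian computation yields the same inequality with more bookkeeping. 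Your additional checks --- pruning zero-weight rules so that $\beta^*\in\mB$, and invoking design invariance plus the summation-by-parts identity to confirm $s_i>s_j$ under $\beta^*$ so that Proposition~\ref{thm:pairwiselearning} applies to the same tiers $O^*$ --- are details the paper glosses over, and they are correct and worth stating.
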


The result follows from the fact that $\bbE_F\left[\exp\left(z\left(\beta(\sigma_v(i)) - \beta(\sigma_v(j))\right)\right)\right]$  is convex in $\beta(k)$, for all $i,j,z,F$. Then, given a randomization over $\beta_1, \dots \beta_P$, we can increase $-\inf_z \log(\cdot)$ by decreasing its argument, by instead using the static scoring rule defined by the corresponding convex combination of $\beta_1, \dots \beta_P$. Note that such a negative result cannot be obtained via analysis that is asymptotic in the number of candidates; we need learning rates for a given election.

Next, we further refine the result of~\citet{caragiannis_learning_2017}, by showing that the ``pivotal pair'' feature of their example -- where different pairs of candidates dominate the learning rate for different mechanisms -- is key. In particular, our next result establishes, again for any noise model, that randomization amongst $K$-Approval mechanisms cannot help separate any \textit{given} pair of candidates.

\begin{restatable}{theorem}{lemrandomizenotbetterapproval}
	\label{lem:randomizenotbetterapproval}
	
		Randomization amongst $K$-Approval mechanisms does not improve the learning rate for separating a given pair of candidates $i,j$ for any asymptotically design-invariant noise model $F$ or goal $G$. For any randomized $K$-Approval mechanism $(B,D)$, where $\beta_p\in B$ corresponds to $p$-Approval, for any $F$, $G$, there exists a mechanism $K_{ij}^*$-Approval such that $r_{ij}(K_{ij}^*) \geq r_{ij}(B,D)$.

\end{restatable}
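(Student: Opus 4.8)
The plan is to reduce the learning rate of a randomized $K$-Approval mechanism to exactly the closed form of Proposition~\ref{lem:pairwiselearning_approval} and then exploit joint convexity of that rate function in its two probability arguments. First I would record the relevant per-voter statistic. For a randomized mechanism $(B,D)$ in which $\beta_p$ (= $p$-Approval) is applied with probability $d_p$, the quantity driving the separation of $i,j$ is $Y = \beta_{P}(\sigma_v(i)) - \beta_{P}(\sigma_v(j))$, with $P\sim D$ the random rule for that voter. Since each $\beta_p$ is $\{0,1\}$-valued, $Y\in\{-1,0,1\}$, and $\pr(Y=1) = \sum_p d_p\, t_{ij}^i(p) \triangleq \bar t^i$, $\pr(Y=-1) = \sum_p d_p\, t_{ij}^j(p) \triangleq \bar t^j$. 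Hence $\bbE[e^{zY}] = \bar t^i e^z + \bar t^j e^{-z} + (1 - \bar t^i - \bar t^j)$, which is precisely the moment generating function appearing in Proposition~\ref{lem:pairwiselearning_approval} with approval-gap probabilities $(\bar t^i,\bar t^j)$; running the same minimization over $z$ gives
\[
r_{ij}(B,D) = g(\bar t^i, \bar t^j), \qquad g(a,b) \triangleq -\log\!\left(2\sqrt{ab} + 1 - a - b\right).
\]
Asymptotic design-invariance (with $i,j$ in distinct tiers of $O^*$) forces $s_i^{(p)} > s_j^{(p)}$, equivalently $t_{ij}^i(p) > t_{ij}^j(p)$, for every $p$; averaging gives $\bar t^i > \bar t^j$, so the randomized mechanism also has a well-defined asymptotic outcome on this pair and $r_{ij}(B,D)$ above is its genuine rate.

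Next I would establish that $g$ is jointly convex on $\{(a,b)\in[0,1]^2 : 2\sqrt{ab}+1-a-b>0\}$. Writing $g = (-\log)\circ h$ with $h(a,b) = 1 - (\sqrt a - \sqrt b)^2 = 1 - a - b + 2\sqrt{ab}$, it suffices that $h$ is concave, since $-\log$ is convex and nonincreasing. The only nonlinear term is $2\sqrt{ab}$, whose Hessian is $\tfrac12 (ab)^{-1/2}\bigl(\begin{smallmatrix}-b/a & 1\\ 1 & -a/b\end{smallmatrix}\bigr)$; the matrix part has trace $-(a/b+b/a)\le 0$ and determinant $0$, hence is negative semidefinite, so $2\sqrt{ab}$ — and therefore $h$ — is concave. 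With convexity of $g$ in hand, Jensen's inequality applied to $\bar t^i = \sum_p d_p t_{ij}^i(p)$, $\bar t^j = \sum_p d_p t_{ij}^j(p)$ yields
\[
r_{ij}(B,D) = g(\bar t^i,\bar t^j) \le \sum_p d_p\, g\!\left(t_{ij}^i(p), t_{ij}^j(p)\right) = \sum_p d_p\, r_{ij}(p) \le \max_p r_{ij}(p),
\]
so choosing $K_{ij}^*\in\arg\max_p r_{ij}(p)$ gives $r_{ij}(K_{ij}^*)\ge r_{ij}(B,D)$, which is the claim.

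I expect the convexity of $g$ to be the only real obstacle; the rest is bookkeeping, and the key structural fact is simply that $Y$ is $\{-1,0,1\}$-valued, which is what lets the randomized rate inherit the Proposition~\ref{lem:pairwiselearning_approval} form verbatim. One should also check the boundary behaviour: $h(a,b) = 1-(\sqrt a-\sqrt b)^2\ge 0$ always on $[0,1]^2$, and where $h=0$ the rate is $+\infty$ and the bound is vacuous, so restricting to $\{h>0\}$ loses nothing. Finally, it is worth emphasizing that the argument is genuinely \emph{pairwise}: it shows no $K$-Approval randomization can beat the best static $K$ on any \emph{fixed} pair, so — reconciling with~\citet{caragiannis_learning_2017} — any advantage of randomizing among $K$-Approval rules for the overall outcome rate $r(B,D) = \min_{ij} r_{ij}(\cdot)$ must come entirely from the identity of the minimizing (``pivotal'') pair changing with $K$.
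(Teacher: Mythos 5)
Your proposal is correct and follows essentially the same route as the paper's proof: express the randomized mechanism's pairwise rate as $-\log\bigl(2\sqrt{ab}+1-a-b\bigr)$ evaluated at the averaged approval-gap probabilities, establish joint convexity of that rate via concavity of the inner function composed with the convex nonincreasing $-\log$, and apply Jensen. Your added details (the explicit $\{-1,0,1\}$-valued statistic justifying the closed form for the mixture, and the Hessian computation for $2\sqrt{ab}$) merely flesh out steps the paper asserts more briefly.
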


The proof relies on the pairwise rate function $r_{ij}(K)$ being convex in the approval probabilities $t^i_{ij}(K), t^j_{ij}(K)$. 

This theorem directly implies that, for the Mallows model, randomization among $K$-Approval voting cannot speed up learning when the goal is to identify a set of $W$ winners, as opposed to when the goal is to rank. 

\begin{restatable}{corollary}{lemmallowsnorando}
	\label{lem:mallowsnorando}
			Randomization among $K$-Approval mechanisms does not improve the learning rate for selecting $W$ winners from the Mallows model. For any randomized $K$-Approval mechanism $(B,D)$, where $\beta_p\in B$ corresponds to $p$-Approval, for selecting $W$ winners from the Mallows model, there exists an Approval rate optimal mechanism $K^*$-Approval such that $r(K^*) \geq r(B,D)$.
	
\end{restatable}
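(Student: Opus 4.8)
The plan is to reduce the statement to a claim about a single pair of candidates and then invoke Theorem~\ref{lem:randomizenotbetterapproval}. By Proposition~\ref{thm:goal_learning}, for the goal of selecting $W$ winners the overall rate of any mechanism equals the minimum of $r_{ij}$ over the \emph{pivotal} pairs $(i,j)$, which under the Mallows model are exactly the pairs with $\sigma^*(i)\le W<\sigma^*(j)$. The heart of the argument is to show that under Mallows the pivotal pair that is hardest to separate is always the same, regardless of which $K$-Approval mechanism or mixture of such is used: namely the ``boundary pair'' $(a,b)$ with $\sigma^*(a)=W$ and $\sigma^*(b)=W+1$. Granting this, $r(K)=r_{ab}(K)$ for every $K$ and $r(B,D)=r_{ab}(B,D)$ for every randomized $K$-Approval mechanism $(B,D)$. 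Theorem~\ref{lem:randomizenotbetterapproval} applied to the pair $(a,b)$ then yields some $K^*_{ab}$ with $r_{ab}(K^*_{ab})\ge r_{ab}(B,D)$, and since the Approval rate optimal $K^*$ exists (the rate ranges over finitely many values of $K$) and satisfies $r(K^*)=\max_K r_{ab}(K)\ge r_{ab}(K^*_{ab})$, we conclude $r(K^*)\ge r(B,D)$.

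To establish the structural claim I would first rewrite the pairwise rate conveniently. Writing $u_{ij}=t^i_{ij}(K)=\pr_F(\sigma_v(i)\le K,\sigma_v(j)>K)$ and $v_{ij}=t^j_{ij}(K)$, Proposition~\ref{lem:pairwiselearning_approval} gives $r_{ij}(K)=-\log\bigl(1-(\sqrt{u_{ij}}-\sqrt{v_{ij}})^2\bigr)$, which is a strictly increasing function of $\sqrt{u_{ij}}-\sqrt{v_{ij}}\ge 0$ (non-negativity holds since $u_{ij}-v_{ij}=s_i-s_j>0$ by asymptotic design-invariance); the same formula holds for a mixture with $u,v$ replaced by the averages $\sum_p d_p t^i_{ij}(p)$ and $\sum_p d_p t^j_{ij}(p)$. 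Under Mallows the joint law of $(\sigma_v(i),\sigma_v(j))$ depends only on the base positions $(\sigma^*(i),\sigma^*(j))$, so $u_{ij}=u(\sigma^*(i),\sigma^*(j))$ for a fixed function $u$ and $v_{ij}=u(\sigma^*(j),\sigma^*(i))$. Hence it suffices to prove the monotonicity that $u(p,q)$ is non-increasing in $p$ and non-decreasing in $q$: then for a pivotal pair, $u_{ij}=u(\sigma^*(i),\sigma^*(j))\ge u(W,\sigma^*(j))\ge u(W,W+1)=u_{ab}$ and symmetrically $v_{ij}\le v_{ab}$, giving $r_{ij}(K)\ge r_{ab}(K)$, and the mixture version follows since these inequalities survive convex combinations over $K$.

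The monotonicity lemma --- $\pr_F(\sigma_v(i)\le K,\sigma_v(j)>K)$ non-increasing in $\sigma^*(i)$ and non-decreasing in $\sigma^*(j)$ for fixed $K$ --- is the step I expect to be the main obstacle. I would reduce it to a single adjacent transposition of the base ranking: let $\hat\sigma^*$ be $\sigma^*$ with candidate $i$ and the candidate $m$ at the neighbouring position interchanged, where $m\ne j$ (the path from a general pivotal pair to the boundary pair can always be taken so that $i$ and $j$ never cross, so each elementary swap involves a third candidate). One then constructs a coupling of $\sigma_v\sim F_{\sigma^*}$ with $\hat\sigma_v\sim F_{\hat\sigma^*}$ under which $\hat\sigma_v$ equals $\sigma_v$ except possibly with $i$ and $m$ transposed, and in which $i$ is ranked weakly worse in $\hat\sigma_v$; under such a coupling the event $\{\sigma_v(i)\le K,\sigma_v(j)>K\}$ can only be destroyed, never created, when passing from $\sigma^*$ to $\hat\sigma^*$, which gives the inequality. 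The delicate point is exhibiting this coupling with exactly these properties --- the standard adjacent-transposition monotonicity of the Mallows distribution --- and alternatively the needed monotonicity of these pairwise approval probabilities can be extracted from the Mallows analysis of $K$-Approval in~\citet{caragiannis_learning_2017}.
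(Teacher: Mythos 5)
Your proposal is correct and takes essentially the same route as the paper: identify the boundary pair $(W,W{+}1)$ as the pivotal pair for every $K$-Approval mechanism and every mixture, then apply Theorem~\ref{lem:randomizenotbetterapproval} to that single pair. The paper's own proof is two sentences long and simply asserts that the pivotal-pair claim ``is easy to show,'' so your monotonicity/coupling sketch actually supplies more justification for that step than the paper does.
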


The proof simply notes that under the Mallows model with this goal, the candidate pair $W, W+1$ (when candidates are indexed according to reference distribution $\sigma^*$) is pivotal regardless of the $K$-Approval mechanism used. 
 This corollary does not extend to arbitrary noise models, where randomization amongst $K$-approval mechanisms may improve the learning rate.

\begin{restatable}{theorem}{lemrandomizebetterapprovalWselection}
	\label{lem:randomizebetterapproval_Wselection}
			Randomization among $K$-Approval mechanisms may improve the learning rate for the goal of selecting $W$ winners. There exist asymptotically design-invariant settings $(M,F)$ for the goal of selecting $W$ winners such that a randomized $K$-Approval mechanism $(B,D)$, where $\beta_p\in B$ corresponds to $p$-Approval, satisfies \[r(B,D) > \max_K r(K) \] 
\end{restatable}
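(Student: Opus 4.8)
The plan is to exhibit an explicit family of voter distributions $F$ on a modest number of candidates $M$ for which selecting $W$ winners has a pivotal \emph{pair} of candidates that changes depending on the $K$-Approval mechanism used, so that no single $K$ can simultaneously separate both pivotal pairs as well as a randomization over two values of $K$ can. This mirrors the mechanism in \citet{caragiannis_learning_2017}, but re-engineered so the tension occurs \emph{across the tier boundary} rather than at the top and bottom of a full ranking. Concretely, I would take $W$ small (say $W=1$ or $W=2$) and construct $F$ so that the winner $C_1^*$ is very hard to separate from one particular runner-up $j_1$ under small $K$ (because $j_1$ is frequently ranked just below the winner, so $t^{C_1}_{\cdot}(K)$ and $t^{j_1}_{\cdot}(K)$ are nearly balanced for small $K$), while it is hard to separate from a different runner-up $j_2$ under large $K$ (because $j_2$ is almost always in the top-$K$ block once $K$ is large, again balancing the approval probabilities). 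Randomizing between a small and a large $K$ then strictly improves $r(B,D)=\min$ over the relevant pairs, because each fixed $K$ is bottlenecked by one of the two pairs while the randomized rule is bottlenecked by neither as severely.

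The key steps, in order: (1) Fix $M$, $W$, and the tier structure $O^*=\{C_1^*,C_2^*\}$, and write down $F$ as a small mixture of ``block-structured'' distributions (for instance a mixture of a few deterministic rankings, or a two-parameter family) with enough free parameters to control the position marginals $\pr_F(\sigma_v(i)\le k)$ for the winner and for two designated runners-up. (2) Verify asymptotic design-invariance via Proposition~\ref{thm:consistency}: check that the winner strictly position-dominates every other candidate at every position $k\in\{1,\dots,M-1\}$; this is the constraint that most restricts the construction, since it must hold while still allowing the approval probabilities $t^{i}_{ij}(K)$ to be nearly balanced for the right values of $K$. (3) Using Proposition~\ref{lem:pairwiselearning_approval}, compute $r_{C_1 j_1}(K)$ and $r_{C_1 j_2}(K)$ as explicit functions of $K$ via the $t$-quantities, and exhibit a small $K_1$ and a large $K_2$ such that $r_{C_1 j_1}(K_1)$ is tiny, $r_{C_1 j_2}(K_2)$ is tiny, but $r_{C_1 j_1}(K_2)$ and $r_{C_1 j_2}(K_1)$ are bounded away from zero. (4) Apply Proposition~\ref{thm:goal_learning} to get $r(K)=\min$ over cross-tier pairs; conclude $\max_K r(K)$ is governed by $\min\{r_{C_1 j_1}(K),r_{C_1 j_2}(K),\dots\}$ and is small for every $K$. (5) For the randomized rule $B=\{K_1\text{-Approval},K_2\text{-Approval}\}$ with $D=\{1/2,1/2\}$, note that the effective approval probability $t^{i}_{ij}(B,D)=\tfrac12 t^{i}_{ij}(K_1)+\tfrac12 t^{i}_{ij}(K_2)$, compute the resulting $r_{ij}(B,D)$ for each cross-tier pair, and show each stays bounded away from zero; hence $r(B,D)>\max_K r(K)$.

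The main obstacle is step (2) combined with step (3): the strict-position-dominance condition of Proposition~\ref{thm:consistency} for the \emph{winner} versus every other candidate pulls the position marginals in the opposite direction from what makes the approval probabilities nearly balanced, so I need to carefully separate the ``hard'' runners-up from the winner by a vanishing margin at the two extreme positions while keeping a uniform strictly-positive margin elsewhere. A clean way to manage this is to let the construction depend on a small parameter $\epsilon\to 0$: design $F_\epsilon$ so that for $K=K_1$ the pair $(C_1,j_1)$ has rate $O(\epsilon^2)$ and for $K=K_2$ the pair $(C_1,j_2)$ has rate $O(\epsilon^2)$, while the randomized rule keeps both pairs at rate $\Omega(1)$ (or at least $\Omega(\epsilon)$), and design-invariance holds strictly for all small $\epsilon>0$ because the dominance margins, though small, never vanish for fixed $\epsilon$. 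I would not grind through the exact numbers here; the proof just needs one worked instance, and a mixture of two or three base rankings with a single parameter $\epsilon$ should suffice. A secondary, easier point to check is that within-tier pairs never become the bottleneck, which is automatic since within-tier pairs do not contribute to $r(\beta)$ in Proposition~\ref{thm:goal_learning}.
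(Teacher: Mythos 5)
Your high-level strategy is exactly the one the paper uses for its constructed example: a single winner $h$ that is hard to separate from one loser under one value of $K$ and from a different loser under another value of $K$, so that the pivotal cross-tier pair switches with the mechanism and randomization lifts the minimum. However, for an existence theorem the construction \emph{is} the proof, and you have deferred precisely the part where all the difficulty lives. Three concrete gaps remain. First, the statement requires $r(B,D) > \max_K r(K)$ over \emph{all} $K$, but your construction only arranges for $r(K_1)$ and $r(K_2)$ to be small; if $K_1$ is ``small'' and $K_2$ is ``large,'' nothing in your plan rules out an intermediate $K_3$ that separates both pivotal pairs well, in which case $\max_K r(K)\ge r(K_3)$ could exceed $r(B,D)$. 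The paper avoids this by taking the two mechanisms adjacent ($K$ and $L=K+1$) and then devoting a separate argument to showing every other $K'$ is strictly worse (near-zero rates for $K'<K-1$ and $K'\ge L+2$, and explicit bounds for $K'=K-1$ and $K'=L+1$). Second, you only check that within-tier pairs are not the bottleneck; you do not verify that the \emph{other} cross-tier pairs --- e.g.\ pairs involving losers other than $j_1,j_2$, or other winners when $W\ge 2$ --- do not dominate $\min_{i,j} r_{ij}$. The paper's construction needs explicit bookkeeping (its Remark~\ref{rem:tminusaplusa} plus conditions like $3a>T_1/Q$) to push those pairs out of the way, and it carefully randomizes the placement of the remaining candidates to preserve design invariance while doing so. Third, ``a mixture of two or three base rankings with a single parameter $\epsilon$ should suffice'' is an unverified assertion; the interaction you yourself flag --- strict position dominance at every $k$ for every cross-tier pair versus near-balanced approval margins at $K_1$ and $K_2$ --- is exactly why the paper ends up with an eleven-row probability table and a system of simultaneous feasibility constraints on $(T_1,T_2,a,\epsilon,Q)$ whose consistency must itself be checked.

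On the pieces that are right: your use of Proposition~\ref{lem:pairwiselearning_approval} to reduce everything to the quantities $t^i_{ij}(K)$, the observation that the randomized mechanism's approval probabilities are the convex combination $\sum_p d_p\, t^i_{ij}(K_p)$, and the lower-bounding of the randomized pairwise rates (which the paper does via the monotonicity property in Remark~\ref{rem:tminusaplusb} --- i.e.\ Remark~\ref{rem:tminusaplusa} --- rather than a direct computation) all match the paper. Note also that the paper gives a second, independent proof by exhibiting the phenomenon in real election data (Durham Ward 1, selecting $2$ winners, randomizing between $3$- and $4$-Approval), where the switching pivotal pairs are $(h,j)$ and $(i,j)$ for two winners $h,i$ and one loser $j$ --- the mirror image of your configuration --- which is an option your outline does not consider but which works equally well.
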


We prove the result two ways: (1) we construct an example in which candidate $h$ is asymptotically selected, and candidates $i,j$ are not. Which of $h \succ i$ or $h \succ j$ is the pivotal pair (determines the overall rate function) depends on the $K$-Approval mechanism used, and randomizing between two mechanisms improves the overall rate; (2) perhaps more interestingly, we find many examples in our real PB elections and other ranking data in which randomization would have sped up learning for the task of selecting a set of winning candidates (see Section~\ref{sec:randomizationpractice}).

\subsection{$K$-Approval for selecting $W$ winners}
\label{sec:approvaltheory}
One of the most common voting settings is identifying a set of $W$ winners using $K$-Approval, whether in representative democracy elections (typically $K=W=1$), polling for such elections (where the goal often is to identify the top few candidates out of many, especially in primary races), or crowd-sourcing labels (where one wants one or a few labels for an item out of many possible ones). Here, we study how to design such elections, i.e., how to choose the best $K$, i.e., the one that maximizes the learning rate. For simplicity, we work with the Mallows model, extending the resulting insights to real-world data in the next section.

Recall that in a Mallows model, each voter's ranking is a noisy sample from a reference distribution $\sigma^*$. With this symmetric model, one may believe that setting $K=W$ is always optimal. For example, when noise parameter $\phi = 0$ and so each voter's ranking is exactly $\sigma^*$, $K=W$ is optimal; in fact, any other design $K \neq W$ fails to correctly identify the set of winners even asymptotically: it would not distinguish among the first $K$ candidates in $\sigma^*$ or among the last $M-K$ candidates. However, our next result establishes that the cases with $\phi > 0$ are different. 

\begin{restatable}{theorem}{lemmallowsnotWK}
	\label{lem:mallowsnotWK}
	Under the Mallows model and the goal of selecting $W$ winners, $W$-Approval may not be Approval rate optimal.
\end{restatable}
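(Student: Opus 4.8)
The plan is to prove the theorem by exhibiting a concrete counterexample: a single number of candidates $M$, a single Mallows noise parameter $\phi \in (0,1)$, and a target number of winners $W$, for which some $K \neq W$ achieves a strictly larger Approval rate than $W$-Approval does. By Corollary~\ref{lem:mallowsnorando}, for the goal of selecting $W$ winners from a Mallows model the pivotal pair is always $(W, W+1)$ (candidates indexed by $\sigma^*$), so by Proposition~\ref{thm:goal_learning} we have $r(K) = r_{W,W+1}(K)$ for every $K$. Thus it suffices to show that the single-pair rate $r_{W,W+1}(K)$ from Proposition~\ref{lem:pairwiselearning_approval} is not maximized at $K = W$. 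Writing $a(K) = t_{W,W+1}^{W}(K) = \pr_F(\sigma_v(W) \leq K,\ \sigma_v(W+1) > K)$ and $b(K) = t_{W,W+1}^{W+1}(K) = \pr_F(\sigma_v(W+1) \leq K,\ \sigma_v(W) > K)$, the rate is $r_{W,W+1}(K) = -\log\!\left(2\sqrt{a(K)b(K)} + 1 - a(K) - b(K)\right)$, which is increasing in $a(K) - b(K)$ and (for fixed difference) increasing as $a(K) + b(K)$ grows, i.e. as the ``total disagreement mass'' on this pair increases.

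The key observation I would develop is the following. At $K = W$, the event that a voter approves exactly one of $\{W, W+1\}$ requires that candidate $W$ (or $W+1$) lands in position $\leq W$ while the other lands in position $> W$ — essentially they must straddle the $W$/$W{+}1$ boundary, which under Mallows with small-to-moderate $\phi$ is a relatively low-probability event, so $a(W) + b(W)$ is small. By contrast, taking $K$ somewhere near $M/2$ (as Section~\ref{sec:approvaltheory}'s discussion and Figure~\ref{fig:mallowsplots} suggest is often optimal), the pair $\{W, W+1\}$ is far more often split by the cutoff: there is much more noise mass with one of the two candidates above position $K$ and the other below, because the cutoff now sits in the ``middle'' where the Mallows distribution spreads candidates $W$ and $W+1$ widely. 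So $a(K) + b(K)$ is substantially larger, while the bias $a(K) - b(K)$ — which is what actually drives asymptotic correctness — stays positive because candidate $W$ is still strictly more likely than $W+1$ to be ranked in the top $K$ for every $K < M$ (this is exactly the design-invariance condition of Proposition~\ref{thm:consistency}, which Mallows satisfies). Trading a modest reduction in $a - b$ for a large increase in $a + b$ increases $r$.

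Concretely I would: (i) fix small values such as $M = 4$, $W = 1$ (so we compare $1$-Approval against $2$- or $3$-Approval for picking a single winner), and pick a convenient $\phi$, say $\phi = 1/2$; (ii) using the Mallows product formula for position-marginals (or direct enumeration over the $4! = 24$ rankings), compute the joint probabilities $\pr_F(\sigma_v(1) = k, \sigma_v(2) = \ell)$, hence $a(K), b(K)$, for $K = 1, 2, 3$; (iii) plug into the closed form of Proposition~\ref{lem:pairwiselearning_approval} and verify numerically that $r_{1,2}(2) > r_{1,2}(1)$ (I expect $K=2$, i.e.\ roughly $M/2$, to win). Since $r(K) = r_{1,2}(K)$ by the pivotal-pair fact, this shows $W$-Approval ($=1$-Approval) is not Approval rate optimal, proving the theorem. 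As a robustness check I would note the inequality is strict and stable, so it persists for an open set of $\phi$; one may also give a short asymptotic argument (let $\phi \to 1^-$, the near-uniform regime) showing $a(W)+b(W) \to$ the straddle probability under the uniform distribution while $K \approx M/2$ captures a strictly larger split probability, making the gap robust.

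\textbf{Main obstacle.} The conceptual content is light — everything reduces, via the two already-proved results, to comparing one explicit two-variable function at a handful of $K$ values — so the real work is purely computational: correctly evaluating the Mallows position-pair marginals $\pr_F(\sigma_v(i) = k, \sigma_v(j) = \ell)$. This is the one place an error would sink the argument, since the closed form for $r_{ij}(K)$ is sensitive to the relative sizes of $a(K)$ and $b(K)$. I would therefore cross-check the marginals two ways (the Repeated-Insertion/product representation of Mallows versus brute-force enumeration for the small $M$) before reporting the numbers. No step requires an inequality harder than ``$-\log(\cdot)$ is decreasing,'' so beyond the bookkeeping there is no genuine analytic difficulty.
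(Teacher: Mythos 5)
Your overall strategy is the same as the paper's: reduce to the pivotal pair $(W,W+1)$ via the Mallows pivotal-pair fact, compute the approval probabilities $a(K)=t^{W}_{W,W+1}(K)$ and $b(K)=t^{W+1}_{W,W+1}(K)$ for a small instance by repeated insertion or enumeration, and compare the closed-form rates of Proposition~\ref{lem:pairwiselearning_approval}. The paper does exactly this with $M=4$, $W=3$ (equivalently $W=1$ by reversal symmetry), contrasting $\phi=0.1$ (where $W$-Approval wins) with $\phi=0.8$ (where it loses). So the skeleton is fine; the problems are in the two places where you commit to specifics.

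First, your monotonicity heuristic is backwards. Writing $s=a+b$ and $d=a-b$, the argument of the $-\log$ is $2\sqrt{ab}+1-a-b=\sqrt{s^2-d^2}+1-s$, whose derivative in $s$ is $s/\sqrt{s^2-d^2}-1\ge 0$; hence for \emph{fixed} difference $d$ the rate \emph{decreases} as the ``total disagreement mass'' $s$ grows (equivalently, in the small-rate regime $r\approx(\sqrt a-\sqrt b)^2=d^2/(\sqrt a+\sqrt b)^2$, which shrinks as $a+b$ grows at fixed $d$). So ``trade a modest reduction in $a-b$ for a large increase in $a+b$'' makes the rate worse, not better; what a larger $K$ actually buys in the high-noise regime is a difference $a-b$ that grows faster than $\sqrt{a+b}$. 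Second, the concrete parameters you name fail. For $M=4$ one can check from the repeated-insertion marginals that $b(K)=\phi\,a(K)$ for the relevant pair at both candidate values of $K$, so $r=-\log\bigl(1-a(K)(1-\sqrt{\phi})^2\bigr)$ is monotone in $a(K)$ alone, and the non-$W$ choice of $K$ wins iff $\phi^2+\phi^3>1$, i.e.\ $\phi\gtrsim 0.755$. At your suggested $\phi=1/2$ the inequality goes the other way and $1$-Approval is rate optimal for $W=1$ (consistent with Figure~\ref{fig:mallowsnumcandid}); the paper's example uses $\phi=0.8$. Since you explicitly defer to a numerical check and also offer the $\phi\to 1^-$ regime (which does work), the proof would be rescued in execution, but as written both the stated example and the intuition offered to justify it are incorrect.
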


We prove the theorem by example. To find this example and to generate the plots discussed next, we use an efficient dynamic program to exactly calculate the joint distributions of the locations $\sigma_v(i),\sigma_v(j)$ of pairs of candidates $i,j$ in a voter's ranking, given the Mallows noise parameter; we can then directly calculate $t_{ij}^i(K), t_{ij}^j(K)$ and thus the learning rate for each $K$-Approval mechanism. This program leverages Mallows repeated insertion probabilities \cite{lu14a,diaconis_group_nodate} and may be of independent interest for numerical analyses of the Mallows model.

\parbold{Numerical analysis} We now numerically analyze, for the Mallows model, how the Approval rate approval $K$-Approval mechanism varies with the Mallows noise parameter $\phi$, the number of candidates $M$, and the number of winners $W$. Recall that the Mallows model is asymptotically design invariant, so different mechanisms only differ in \textit{how quickly} they learn the asymptotic outcome.

In Figure~\ref{fig:mallowsnumcandid}, the goal is to select $W=1$ winner, and $\phi$ and $M$ are varied. With low noise, $\phi \lessapprox  .5$, it is rate optimal to use $1$-Approval, i.e., ask each voter to select their favorite candidate, regardless of how many candidates there are. However, with higher noise $\phi$, as the number of candidates in the election increases, so does the $K$ in the optimal $K$-Approval mechanism. For $\phi = .999, M = 50$, for example, it is best to ask each voter to select their favorite $25$ candidates, even if the task is to identify the single best candidate according to the reference distribution $\sigma^*$.

Similarly, Figure~\ref{fig:mallowsnumwinners} shows how the rate optimal $K$-Approval mechanism changes with the number of winners desired and the noise parameter, fixing the number of candidates at $M=50$. Again with high noise, it is best to ask voters to identify their favorite half of candidates, regardless of how many winners need to be identified. With low noise, however, $W$-Approval is optimal to select $W$ winners. 

Overall, the analysis suggests that with higher noise in the voter model, one should tend toward asking voters to rank their favorite half of candidates, regardless of $M$ and $W$. 

The high-noise setting may seem unrealistic; however, as we will see in the next section, which $K$-Approval mechanism is rate optimal in practice often scales like the high noise settings, consistent with the idea that voting distributions in practice do not look like they are drawn from a low-noise Mallows model. We now turn to such empirical analyses.

%
%
%
%
%
%
%
%
%
%
%

%

%

%
%

%
%

%
%

%

%
% %

%
%
\makeatletter{}%
\section{Empirics and PB deployments}
\label{sec:data}

\begin{figure*}
	\centering
	\begin{subfigure}[b]{.48\textwidth}
		\includegraphics[width=\linewidth]{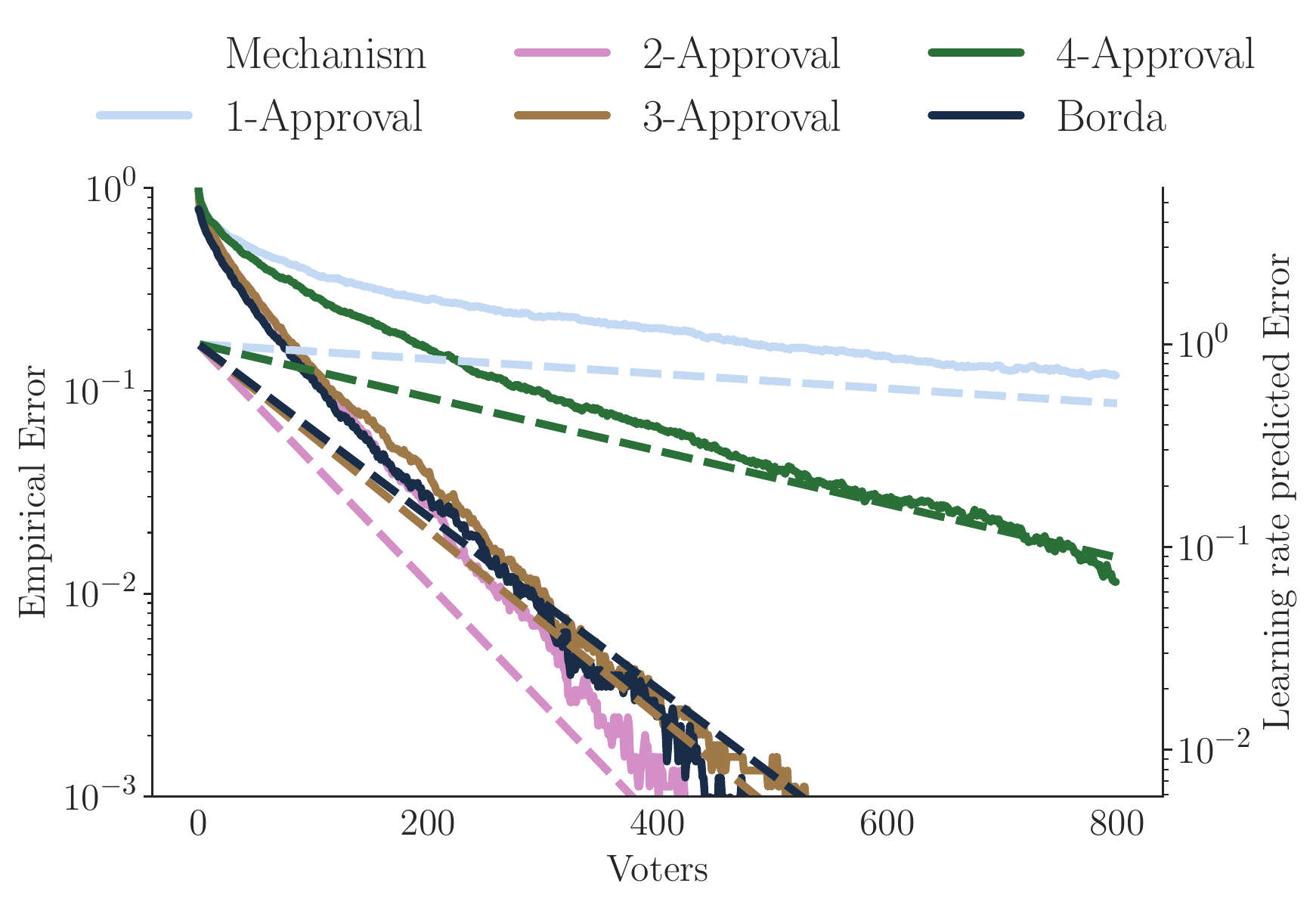}\vfill
		\caption{Boston 2016 PB election, selecting 1 winner: Average empirical bootstrapped error -- i.e., fraction of times the asymptotic winner is selected (solid lines, left axis), compared to such errors over time implied by the (empirically calculated) learning rates -- i.e., $e^{-rN}$ (dashed lines, right axis). The right axis is a vertically shifted (in log scale) version of the left axis, reflecting that the learning rate errors are asymptotically valid up to polynomial factors. All mechanisms return the same winner when all votes are counted. ``Borda'' is the Borda count for the 4 candidates ranked, and all others are assumed to be tied at rank 5 for each voter.}
	\label{fig:errorovertime}
	\end{subfigure}\hfill
	\begin{subfigure}[b]{.48\textwidth}
		\includegraphics[width=\linewidth]{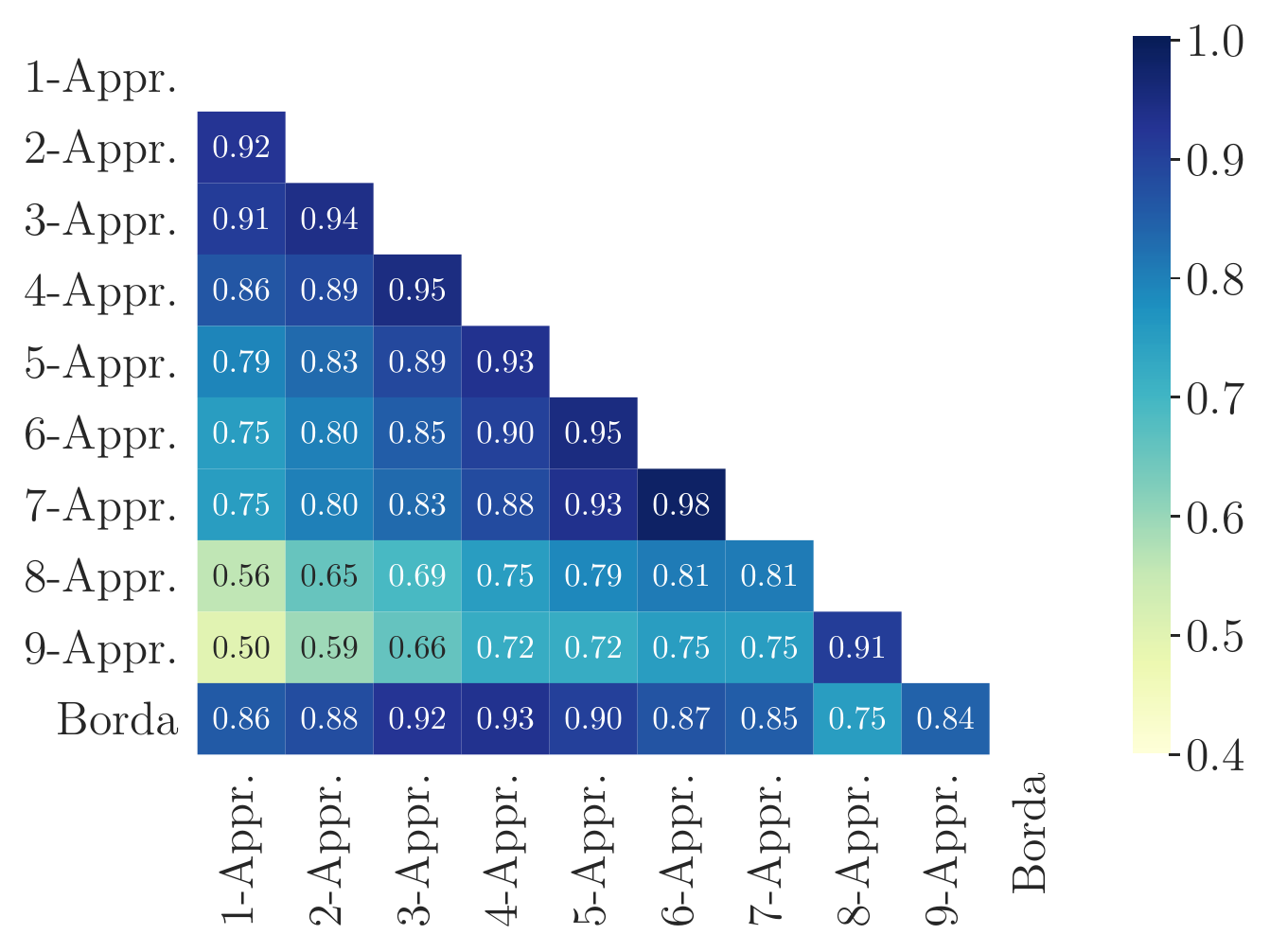}\vfill
		\caption{Approximate design invariance across elections. For the task of selecting $W=4$ winners, this plot shows the average overlap in the top $4$ candidates identified by different mechanisms across all the elections in our dataset, if all voters with complete rankings are counted. For example, of the top 4 candidates identified by $1$-Approval across elections, $92\%$ are also identified as top $4$ candidates by $2$-Approval. For each $K$-Approval mechanism, we include all elections where there were at least $K+1$ candidates. }
		\label{fig:approxdesigninvar}
	\end{subfigure}\hfill
	\caption{ Validating model: comparing learning rates to empirical error, and showing approximate design invariance.}

\end{figure*}

We now apply our insights to practice. We focus on $K$-Approval voting, as opposed to general scoring rules. This section is organized as follows. In Section~\ref{sec:datadesc}, we describe our data sources. We validate our model in Section~\ref{sec:modelvalidation}; first, we demonstrate that large deviation rates effectively describe how quickly various mechanisms learn; next, we show that in practice voter noise models are approximately design invariant. In Section~\ref{sec:approvaldata}, we show that the insights from Section~\ref{sec:approvaltheory} regarding optimal approval mechanisms extend to practice. Finally in Section~\ref{sec:randomizationpractice} we note that we find many examples in practice where randomizing between $K$-Approval mechanisms improves learning.

\subsection{Data description}
\label{sec:datadesc}
 We leverage two data sources (detailed dataset information is in Appendix Table~\ref{tab:datasources}). First, we have partnered with dozens of local governments to help run participatory budgeting (PB) elections in the last five years. These elections have used a variety of methods, primarily $K$-Approval; our data in this work comes from 5 elections where $K$-Ranking was used, including 3 recent elections where $K=10$. This data is particularly useful as PB is among the most common types of elections with many candidates and several winners, with several theoretical analyses~\cite{goel_knapsack_2016,garg2019iterative,freeman2019truthful}. %

  Second, we use data available on PrefLib~\cite{MaWa13a,preflibstv,regenwetter2007unexpected,regenwetter2008behavioural,popov2014consensus}, limiting ourselves to 28 elections with at least 5 candidates and 700 voters who provided full rankings. This ranking data spans many domains, from people's sushi preferences to Glasgow City Council elections. This domain breadth supports the broad applicability of the design insights explored in this section.

We focus on ranking data to be able to simulate counter-factuals for the same election: with $K$-Ranking data, we can simulate what would have occurred with any $K'$-Approval elicitation mechanism, for $K'\leq K$ (assuming no behavioral quirks). With approval data, on the other hand, one cannot compare the mechanism to any other for that given election.

One challenge is that ranking many candidates is onerous, and so voters rank at most 14 candidates in our dataset. For the data we use from on PrefLib, full rankings (rankings up to the number of candidates) are available. In the PB elections in our partner cities, typically each voter ranks or selects her favorite $K \ll M$ candidates.

\subsection{Model validation}
\label{sec:modelvalidation}
Our model and design approach has two components that must be validated: (1) that learning rates can effectively be used to compare different mechanisms, and (2) that design invariance (approximately) holds in practice.

\parbold{Large deviation rates as effective proxies for learning}
We now confirm that, for a given election, empirically calculated large deviation learning rates are effective proxies for the rate at which the error in recovering the asymptotic output decreases as the number of voters increases (even though large deviation learning rates are only asymptotically valid in the number of voters). As examples, we first identify three elections and goals for which many of the potential $K$-Approval mechanisms return exactly the same asymptotic outcome. Then, we bootstrap voters from the available data of voters and empirically calculate the errors made in identifying the winning set of candidates. We further calculate the large deviation learning rates for these mechanisms, using $F$ implied by the voting data and the formula in Proposition~\ref{thm:goal_learning}.\footnote{Given an empirical $\hat F$, learning rates can be numerically calculated: the $\inf_z[\cdot]$ is a convex minimization problem.}

Figure~\ref{fig:errorovertime} shows the resulting errors over time for one such election where $4$-Rankings are available. We further plot  $e^{-rN}$ for each mechanism, i.e., the error over time implied by the learning rate (up to polynomial factors). This plot, along with Appendix Figure~\ref{fig:errorovertime_app}, yields several insights: 
\begin{enumerate}[(1)]
		\item The mechanism matters: when selecting 1 winner from the election in Figure~\ref{fig:errorovertime} after $400$ votes, there is $20\%$ chance of not picking the ultimate winner if $1$-Approval is used. With $2$ or $3$-Approval, this number is $0.1\%$. The winner appears often in a voter's top two or three positions (but not necessarily first), while the ultimate second place candidate often falls outside the top three. Scoring rules that reward top three placements thus perform well. 
	\item The learning rates effectively capture the behavior of the empirical error: both comparatively across mechanisms, as well as the asymptotic rate (slope of the line in log scale). This property enables use of large deviation learning rates as proxies for learning even in  elections with a small number of voters.
	\item Ranking $K$ candidates rather than selecting $K$ candidates is more onerous for voters. However, it does not always provide more information in terms of learning rates, as in the examples in Appendix Figure~\ref{fig:errorovertime_app}.%

\end{enumerate}

\parbold{Design invariance in practice} Design invariance does not strictly hold  in any election in our dataset (as expected as the condition is strong). However, it \textit{approximately} holds. Similar mechanisms produce the same asymptotic outcome for many tasks. Figure~\ref{fig:approxdesigninvar} shows, for example, the average overlap across elections in the top $4$ candidates identified by each mechanism. (Appendix Figure~\ref{fig:approxdesigninvar_app} shows the same plot for the top $1$ and $3$ candidates, as well as the average Kendall's $\tau$ rank correlation between the full rankings identified by different mechanisms). Furthermore, we find many elections and goals where most mechanisms return the same asymptotic answer, as in the elections we leverage for the plots showing learning rates are effective proxies. This relative consistency, especially for similar mechanisms, enables us to compare different mechanisms by their learning rates.

\subsection{$K$-Approval for selecting $W$ winners}
\label{sec:approvaldata}
 In Section~\ref{sec:approvaltheory}, we showed for the Mallows model how the rate optimal $K$-Approval mechanism changes with the noise parameter $\phi$, the number of candidates, and the number of winners. We now show this scaling in practice. 

For every election in our dataset, we find the Approval rate optimal mechanism (among $K$ we can simulate) for every goal of selecting $W$ winners, for $1 \leq W \leq M$. We then run a regression across all the elections for which $K$ is rate optimal, versus the number of winners desired and the number of candidates; see Table~\ref{tab:kregression} in the Appendix for the regression table. While there is some variation across elections, the number of candidates and winners proves a reasonable metric across elections for the rate approval $K$-Approval mechanism ($R^2 \approx .27$). 

The regression confirms the idea that in practice, one should regularize toward asking voters to choose their favorite half the candidates. For picking a small subset of winners $W \approx 4$ out of more than $10$ candidates, for example, one should ask voters to provide their favorite $K \approx 6$ candidates, with $K> W$. This suggestion directly counters common practice. In the PB elections that we have helped run, for example, $4$ or $5$-Approval is most typical, even though ultimately 6-10 projects may be funded (out of $\approx15$-$20$). 

Then, in Figure~\ref{fig:mallowsplotsapp} in the Appendix, we plot the line induced from the regression coefficients with the Mallows rate optimal lines, for $M\leq 10$ candidates.  Comparing to the rate optimal mechanisms for the Mallows model with various $\phi$ (within the candidate range for which we have empirical data), we find that empirical data behaves most closely to a Mallows model with noise parameter $\phi \in [.8, .9]$. (We are not claiming that empirical data is drawn from a Mallows model; it most certainly is not, with factors such as polarizing projects important in practice). This coarse comparison provides an approximate expected scaling behavior for elections with many candidates. 

\subsection{Randomization in practice}
\label{sec:randomizationpractice}

We find 16 examples in which randomizing between two $K$-Approval mechanisms leads to faster learning than using either mechanism separately, including 8 examples where such randomization beats the Approval rate optimal mechanism. Table~\ref{tab:randomization} in the Appendix contains details. 

%

%

%
%

%

%

%
%
%
%
%
%
%

%
%
%
%
% %

%
\makeatletter{}%
\section{Discussion}
We show that in elections with many candidates, the elicitation mechanism and corresponding scoring rule used affect how quickly the final outcome is learned. The learning speed differential between mechanisms can be the difference between identifying the ultimate winner with only a $80\%$ probability or a $99.9\%$ probability after 400 voters, for example. We then provide design decisions that emerge when our framework is applied to data from real elections. When using $K$-Approval to select a small number of $W$ winners, for example, it is often better to ask voters to identify their favorite $K>W$ candidates. The insights from this work should be applicable in a variety of such settings, from elections to crowdsourcing labeling tasks.

There are several important, open research avenues. Most importantly, in real elections maximizing the rate at which the final outcome is identified is not the only goal, and future work should seek to balance such multiple objectives.

For example, there may be axiomatic reasons to prefer one elicitation mechanism over another, e.g., that the final outcome corresponds to the candidate(s) that the most voters indicate is their first choice. Another objective may be to minimize the cognitive load imposed on voters. Asking voters to provide a full ranking over the candidates and then using a rate-optimal scoring rule trivially provides faster learning than any other mechanism. However, asking voters to rank 20 candidates is prohibitive in many settings. Future empirical work, in line with that of \citet{benade_efciency_2018}~and~\citet{gelauff_comparing_2018}, should study the cognitive load various mechanisms impose on voters, to better understand the trade-off between the objectives.    %

\section*{Acknowledgments}
	
We thank our Participatory Budgeting city partners, especially those in Boston, Durham, and Rochester. We also thank anonymous reviewers for their comments. This work was supported in part by the Stanford Cyber Initiative, the Office of Naval Research grant N00014-19-1-2268, and National Science Foundation grants 1544548 and 1637397.

\bibliographystyle{aaai}
\bibliography{bib}

\begin{thebibliography}{}

\bibitem[\protect\citeauthoryear{Al{\'o}s-Ferrer and
  Grani{\'c}}{2012}]{alos-ferrer_two_2011}
Al{\'o}s-Ferrer, C., and Grani{\'c}, D.-G.
\newblock 2012.
\newblock Two field experiments on approval voting in germany.
\newblock {\em Social Choice and Welfare} 39(1):171--205.

\bibitem[\protect\citeauthoryear{Aziz \bgroup et al\mbox.\egroup
  }{2015}]{aziz_computational_2015}
Aziz, H.; Gaspers, S.; Gudmundsson, J.; Mackenzie, S.; Mattei, N.; and Walsh,
  T.
\newblock 2015.
\newblock Computational aspects of multi-winner approval voting.
\newblock In {\em Proceedings of the 2015 International Conference on
  Autonomous Agents and Multiagent Systems},  107--115.
\newblock International Foundation for Autonomous Agents and Multiagent
  Systems.

\bibitem[\protect\citeauthoryear{Aziz \bgroup et al\mbox.\egroup
  }{2017}]{aziz_justified_2017}
Aziz, H.; Brill, M.; Conitzer, V.; Elkind, E.; Freeman, R.; and Walsh, T.
\newblock 2017.
\newblock Justified representation in approval-based committee voting.
\newblock {\em Social Choice and Welfare} 48(2):461--485.

\bibitem[\protect\citeauthoryear{Benade \bgroup et al\mbox.\egroup
  }{2018}]{benade_efciency_2018}
Benade, G.; Itzhak, N.; Shah, N.; and Procaccia, A.~D.
\newblock 2018.
\newblock Efﬁciency and {Usability} of {Participatory} {Budgeting} {Methods}.
\newblock ~8.

\bibitem[\protect\citeauthoryear{Boyd and
  Vandenberghe}{2004}]{boyd_convex_2004}
Boyd, S.~P., and Vandenberghe, L.
\newblock 2004.
\newblock {\em Convex optimization}.
\newblock Cambridge, UK ; New York: Cambridge University Press.

\bibitem[\protect\citeauthoryear{Caragiannis and
  Micha}{2017}]{caragiannis_learning_2017}
Caragiannis, I., and Micha, E.
\newblock 2017.
\newblock Learning a {Ground} {Truth} {Ranking} {Using} {Noisy} {Approval}
  {Votes}.
\newblock In {\em Proceedings of the {Twenty}-{Sixth} {International} {Joint}
  {Conference} on {Artificial} {Intelligence}},  149--155.
\newblock Melbourne, Australia: International Joint Conferences on Artificial
  Intelligence Organization.

\bibitem[\protect\citeauthoryear{Caragiannis \bgroup et al\mbox.\egroup
  }{2017}]{caragiannis_subset_2017}
Caragiannis, I.; Nath, S.; Procaccia, A.~D.; and Shah, N.
\newblock 2017.
\newblock Subset selection via implicit utilitarian voting.
\newblock {\em Journal of Artificial Intelligence Research} 58:123--152.

\bibitem[\protect\citeauthoryear{Caragiannis \bgroup et al\mbox.\egroup
  }{2019}]{caragiannis2019optimizing}
Caragiannis, I.; Chatzigeorgiou, X.; Krimpas, G.~A.; and Voudouris, A.~A.
\newblock 2019.
\newblock Optimizing positional scoring rules for rank aggregation.
\newblock {\em Artificial Intelligence} 267:58--77.

\bibitem[\protect\citeauthoryear{Caragiannis, Procaccia, and
  Shah}{2013}]{caragiannis_when_2013}
Caragiannis, I.; Procaccia, A.~D.; and Shah, N.
\newblock 2013.
\newblock When do noisy votes reveal the truth?
\newblock In {\em Proceedings of the fourteenth ACM conference on Electronic
  commerce},  143--160.
\newblock ACM.

\bibitem[\protect\citeauthoryear{Chierichetti and
  Kleinberg}{2014}]{chierichetti2014voting}
Chierichetti, F., and Kleinberg, J.
\newblock 2014.
\newblock Voting with limited information and many alternatives.
\newblock {\em SIAM Journal on Computing} 43(5):1615--1653.

\bibitem[\protect\citeauthoryear{Copeland}{1951}]{copeland1951reasonable}
Copeland, A.~H.
\newblock 1951.
\newblock A reasonable social welfare function.
\newblock Technical report, mimeo, 1951. University of Michigan.

\bibitem[\protect\citeauthoryear{de Borda}{1781}]{de1781memoire}
de~Borda, J.~C.
\newblock 1781.
\newblock M{\'e}moire sur les {\'e}lections au scrutin.

\bibitem[\protect\citeauthoryear{de Weerdt, Gerding, and
  Stein}{2016}]{de2016minimising}
de~Weerdt, M.~M.; Gerding, E.~H.; and Stein, S.
\newblock 2016.
\newblock Minimising the rank aggregation error.
\newblock In {\em Proceedings of the 2016 International Conference on
  Autonomous Agents \& Multiagent Systems},  1375--1376.
\newblock International Foundation for Autonomous Agents and Multiagent
  Systems.

\bibitem[\protect\citeauthoryear{Dembo and Zeitouni}{2010}]{dembo_large_2010}
Dembo, A., and Zeitouni, O.
\newblock 2010.
\newblock {\em Large {Deviations} {Techniques} and {Applications}}, volume~38
  of {\em Stochastic {Modelling} and {Applied} {Probability}}.
\newblock Berlin, Heidelberg: Springer Berlin Heidelberg.

\bibitem[\protect\citeauthoryear{Diaconis}{1988}]{diaconis_group_nodate}
Diaconis, P.
\newblock 1988.
\newblock Group representations in probability and statistics.
\newblock {\em Lecture notes-monograph series} 11:i--192.

\bibitem[\protect\citeauthoryear{Elkind \bgroup et al\mbox.\egroup
  }{2017}]{elkind_properties_2015}
Elkind, E.; Faliszewski, P.; Skowron, P.; and Slinko, A.
\newblock 2017.
\newblock Properties of {Multiwinner} {Voting} {Rules}.
\newblock {\em Social Choice and Welfare} 48(3):599--632.

\bibitem[\protect\citeauthoryear{Faliszewski and
  Talmon}{2018}]{faliszewski_framework_2018}
Faliszewski, P., and Talmon, N.
\newblock 2018.
\newblock A framework for approval-based budgeting methods.
\newblock {\em arXiv preprint arXiv:1809.04382}.

\bibitem[\protect\citeauthoryear{Fishburn and
  Gehrlein}{1976}]{fishburn_bordas_1976}
Fishburn, P.~C., and Gehrlein, W.~V.
\newblock 1976.
\newblock Borda's rule, positional voting, and {Condorcet}'s simple majority
  principle.
\newblock {\em Public Choice} 28(1):79--88.

\bibitem[\protect\citeauthoryear{Fishburn}{1978}]{fishburn_axioms_1978}
Fishburn, P.~C.
\newblock 1978.
\newblock Axioms for approval voting: Direct proof.
\newblock {\em Journal of Economic Theory} 19(1):180--185.

\bibitem[\protect\citeauthoryear{Freeman \bgroup et al\mbox.\egroup
  }{2019}]{freeman2019truthful}
Freeman, R.; Pennock, D.~M.; Peters, D.; and Vaughan, J.~W.
\newblock 2019.
\newblock Truthful aggregation of budget proposals.
\newblock {\em arXiv preprint arXiv:1905.00457}.

\bibitem[\protect\citeauthoryear{Garg and Johari}{2018}]{garg2018designing}
Garg, N., and Johari, R.
\newblock 2018.
\newblock Designing informative rating systems: Evidence from an online labor
  market.
\newblock {\em arXiv preprint arXiv:1810.13028}.

\bibitem[\protect\citeauthoryear{Garg and Johari}{2019}]{garg2019designing}
Garg, N., and Johari, R.
\newblock 2019.
\newblock Designing optimal binary rating systems.
\newblock In {\em Proceedings of the 22nd International Conference on
  Artificial Intelligence and Statistics}.

\bibitem[\protect\citeauthoryear{Garg \bgroup et al\mbox.\egroup
  }{2019}]{garg2019iterative}
Garg, N.; Kamble, V.; Goel, A.; Marn, D.; and Munagala, K.
\newblock 2019.
\newblock Iterative local voting for collective decision-making in continuous
  spaces.
\newblock {\em Journal of Artificial Intelligence Research} 64(1):315--355.

\bibitem[\protect\citeauthoryear{Gelauff \bgroup et al\mbox.\egroup
  }{2018}]{gelauff_comparing_2018}
Gelauff, L.; Sakshuwong, S.; Garg, N.; and Goel, A.
\newblock 2018.
\newblock Comparing voting methods for budget decisions on the {ASSU} ballot.
\newblock Technical report.

\bibitem[\protect\citeauthoryear{Goel \bgroup et al\mbox.\egroup
  }{2016}]{goel_knapsack_2016}
Goel, A.; Krishnaswamy, A.~K.; Sakshuwong, S.; and Aitamurto, T.
\newblock 2016.
\newblock Knapsack {Voting}: {Voting} mechanisms for {Participatory}
  {Budgeting}.

\bibitem[\protect\citeauthoryear{Guiver and
  Snelson}{2009}]{guiver_bayesian_2009}
Guiver, J., and Snelson, E.
\newblock 2009.
\newblock Bayesian inference for plackett-luce ranking models.
\newblock In {\em proceedings of the 26th annual international conference on
  machine learning},  377--384.
\newblock ACM.

\bibitem[\protect\citeauthoryear{Kemeny}{1959}]{kemeny1959mathematics}
Kemeny, J.~G.
\newblock 1959.
\newblock Mathematics without numbers.
\newblock {\em Daedalus} 88(4):577--591.

\bibitem[\protect\citeauthoryear{Lackner and
  Skowron}{2018a}]{lackner_consistent_2018}
Lackner, M., and Skowron, P.
\newblock 2018a.
\newblock Consistent {Approval}-{Based} {Multi}-{Winner} {Rules}.
\newblock In {\em Proceedings of the 2018 {ACM} {Conference} on {Economics} and
  {Computation}}, {EC} '18,  47--48.
\newblock New York, NY, USA: ACM.

\bibitem[\protect\citeauthoryear{Lackner and
  Skowron}{2018b}]{lackner_quantitative_2018}
Lackner, M., and Skowron, P.
\newblock 2018b.
\newblock A quantitative analysis of multi-winner rules.
\newblock {\em arXiv preprint arXiv:1801.01527}.

\bibitem[\protect\citeauthoryear{Lee \bgroup et al\mbox.\egroup
  }{2014}]{lee2014crowdsourcing}
Lee, D.~T.; Goel, A.; Aitamurto, T.; and Landemore, H.
\newblock 2014.
\newblock Crowdsourcing for participatory democracies: Efficient elicitation of
  social choice functions.
\newblock In {\em Second AAAI Conference on Human Computation and
  Crowdsourcing}.

\bibitem[\protect\citeauthoryear{Lu and Boutilier}{2011}]{lu_learning_2011}
Lu, T., and Boutilier, C.
\newblock 2011.
\newblock Learning {Mallows} {Models} with {Pairwise} {Preferences}.
\newblock In {\em Proceedings of the 28th {International} {Conference} on
  {International} {Conference} on {Machine} {Learning}}, {ICML}'11,  145--152.
\newblock USA: Omnipress.

\bibitem[\protect\citeauthoryear{Lu and Boutilier}{2014}]{lu14a}
Lu, T., and Boutilier, C.
\newblock 2014.
\newblock Effective sampling and learning for mallows models with
  pairwise-preference data.
\newblock {\em Journal of Machine Learning Research} 15:3963--4009.

\bibitem[\protect\citeauthoryear{Mallows}{1957}]{mallows1957non}
Mallows, C.~L.
\newblock 1957.
\newblock Non-null ranking models. i.
\newblock {\em Biometrika} 44(1/2):114--130.

\bibitem[\protect\citeauthoryear{marquis~de Condorcet}{1785}]{marquis1785essai}
marquis~de Condorcet, M. J.~A.
\newblock 1785.
\newblock {\em Essai sur l'application de l'analyse a la probabilite des
  decisions: rendues a la pluralite de voix}.
\newblock De l'Imprimerie royale.

\bibitem[\protect\citeauthoryear{Mattei and Walsh}{2013}]{MaWa13a}
Mattei, N., and Walsh, T.
\newblock 2013.
\newblock Preflib: A library of preference data \textsc{http://preflib.org}.
\newblock In {\em Proceedings of the 3rd International Conference on
  Algorithmic Decision Theory (ADT 2013)}, Lecture Notes in Artificial
  Intelligence.
\newblock Springer.

\bibitem[\protect\citeauthoryear{Maystre and
  Grossglauser}{2015}]{maystre_fast_2015}
Maystre, L., and Grossglauser, M.
\newblock 2015.
\newblock Fast and accurate inference of plackett--luce models.
\newblock In {\em Advances in neural information processing systems}.
\newblock  172--180.

\bibitem[\protect\citeauthoryear{O'Neill}{2013}]{preflibstv}
O'Neill, J.
\newblock 2013.
\newblock Open {S}{T}{V}.

\bibitem[\protect\citeauthoryear{Popov, Popova, and
  Regenwetter}{2014}]{popov2014consensus}
Popov, S.~V.; Popova, A.; and Regenwetter, M.
\newblock 2014.
\newblock Consensus in organizations: Hunting for the social choice conundrum
  in apa elections.
\newblock {\em Decision} 1(2):123.

\bibitem[\protect\citeauthoryear{Procaccia and Shah}{2015}]{procaccia_is_2015}
Procaccia, A.~D., and Shah, N.
\newblock 2015.
\newblock Is {Approval} {Voting} {Optimal} {Given} {Approval} {Votes}?
\newblock In {\em Advances in {Neural} {Information} {Processing} {Systems}
  28},  1801--1809.

\bibitem[\protect\citeauthoryear{{Public
  Agenda}}{2016}]{public_agenda_public_2016}
{Public Agenda}.
\newblock 2016.
\newblock Public {Spending} {By} {The} {People}: {Participatory} {Budgeting} in
  the {United} {States} and {Canada} in 2014–15.
\newblock Technical report, The Yankelovich Center for Public Judgment.

\bibitem[\protect\citeauthoryear{Ratliff}{2003}]{ratliff_startling_2003}
Ratliff, T.~C.
\newblock 2003.
\newblock Some startling inconsistencies when electing committees.
\newblock {\em Social Choice and Welfare} 21(3):433--454.

\bibitem[\protect\citeauthoryear{Regenwetter \bgroup et al\mbox.\egroup
  }{2007}]{regenwetter2007unexpected}
Regenwetter, M.; Kim, A.; Kantor, A.; and Ho, M.-H.~R.
\newblock 2007.
\newblock The unexpected empirical consensus among consensus methods.
\newblock {\em Psychological Science} 18(7):629--635.

\bibitem[\protect\citeauthoryear{Regenwetter \bgroup et al\mbox.\egroup
  }{2008}]{regenwetter2008behavioural}
Regenwetter, M.; Grofman, B.; Popova, A.; Messner, W.; Davis-Stober, C.~P.; and
  Cavagnaro, D.~R.
\newblock 2008.
\newblock Behavioural social choice: a status report.
\newblock {\em Philosophical Transactions of the Royal Society B: Biological
  Sciences} 364(1518):833--843.

\bibitem[\protect\citeauthoryear{Staring}{1986}]{staring_two_1986}
Staring, M.
\newblock 1986.
\newblock Two paradoxes of committee elections.
\newblock {\em Mathematics Magazine} 59(3):158--159.

\bibitem[\protect\citeauthoryear{Tataru and
  Merlin}{1997}]{tataru_relationship_1997}
Tataru, M., and Merlin, V.
\newblock 1997.
\newblock On the relationship of the {Condorcet} winner and positional voting
  rules.
\newblock {\em Mathematical Social Sciences} 34(1):81--90.

\bibitem[\protect\citeauthoryear{Wiseman}{2000}]{wiseman_approval_2000}
Wiseman, J.
\newblock 2000.
\newblock Approval voting in subset elections.
\newblock {\em Economic Theory} 15(2):477--483.

\bibitem[\protect\citeauthoryear{Young}{1975}]{young1975social}
Young, H.~P.
\newblock 1975.
\newblock Social choice scoring functions.
\newblock {\em SIAM Journal on Applied Mathematics} 28(4):824--838.

\bibitem[\protect\citeauthoryear{Young}{1988}]{young1988condorcet}
Young, H.~P.
\newblock 1988.
\newblock Condorcet's theory of voting.
\newblock {\em American Political science review} 82(4):1231--1244.

\bibitem[\protect\citeauthoryear{Zhao, Piech, and
  Xia}{2016}]{zhao_learning_2016}
Zhao, Z.; Piech, P.; and Xia, L.
\newblock 2016.
\newblock Learning {Mixtures} of {Plackett}-{Luce} {Models}.
\newblock In {\em International Conference on Machine Learning},  2906--2914.

\end{thebibliography}

\onecolumn
\appendix
\makeatletter{}%
\section{Empirics additional information}
\FloatBarrier
\begin{table}[h]
	\small
	\begin{center}
\begin{tabular}{lccc}
	\textbf{Name} &  \textbf{Candidates} &  \textbf{Votes with complete rankings} &  \textbf{K-Ranking available}  \\&&&\\
	\textbf{Participatory Budgeting}&&&\\
	Boston, 2016 &           8 &   4173 &                    4\\
	Durham Ward 1, 2019 &          21 &    1637 &                   10\\
	Durham Ward 2, 2019 &          10 &    329 &                   10 \\
	Durham Ward 3, 2019 &          12 &    694 &                   10    \\
	Rochester, 2019 &          22 &    649 &                    5\\&&&\\\textbf{PrefLib:}&&&\\
	Irish01 &          12 &   4259 &                   12 \\
	Irish02 &           9 &   4810 &                    9 \\
	Irish03 &          14 &   3166 &                   14 \\
	ElectorialReformSociety77 &          12 &   1312 &                   12 \\
	ElectorialReformSociety13 &           5 &   1809 &                    5 \\
	Sushi10 &          10 &   5000 &                   10 \\
	Glasgow05 &          10 &    718 &                   10  \\
	Glasgow17 &           9 &    962 &                    9  \\
	Glasgow10 &           9 &    818 &                    9  \\
	Glasgow18 &           9 &    767 &                    9  \\
	Glasgow20 &           9 &    726 &                    9  \\
	Glasgow14 &           8 &   1071 &                    8  \\
	Glasgow12 &           8 &   1040 &                    8  \\
	Burlington01 &           6 &   2603 &                    6  \\
	Burlington02 &           6 &   2853 &                    6  \\
	APA03 &           5 &  11539 &                    5 \\
	APA01 &           5 &  10978 &                    5 \\
	APA11 &           5 &  10791 &                    5 \\
	APA05 &           5 &  10655 &                    5 \\
	APA02 &           5 &  10623 &                    5 \\
	APA04 &           5 &  10519 &                    5 \\
	APA09 &           5 &  10211 &                    5 \\
	APA06 &           5 &  10177 &                    5 \\
	APA07 &           5 &   9747 &                    5 \\
	APA12 &           5 &   9091 &                    5 \\
	APA08 &           5 &   8532 &                    5 \\
	APA10 &           5 &   8467 &                    5 \\
	Aspen02 &           5 &   1183 &                    5 
\end{tabular}
\caption{List of election data that we use in Section~\ref{sec:data}. From PrefLib, we use all elections where full rankings are available and there are at least 5 candidates and 700 voters. Throughout, we ignore voters who did not submit full rankings (especially with high $K$-Ranking requested, this might only be a fraction of the total number of actual votes). Additionally, for the PB elections, we limit the data to those who submitted votes online rather than through paper ballots. 
	~\\\\
	Sources for the PrefLib datasets are: \cite{MaWa13a,preflibstv,regenwetter2007unexpected,regenwetter2008behavioural,popov2014consensus}.}
\label{tab:datasources}
\end{center}
\end{table}

\begin{figure*}
	\centering
	\begin{subfigure}[b]{.48\textwidth}
		\includegraphics[width=\linewidth]{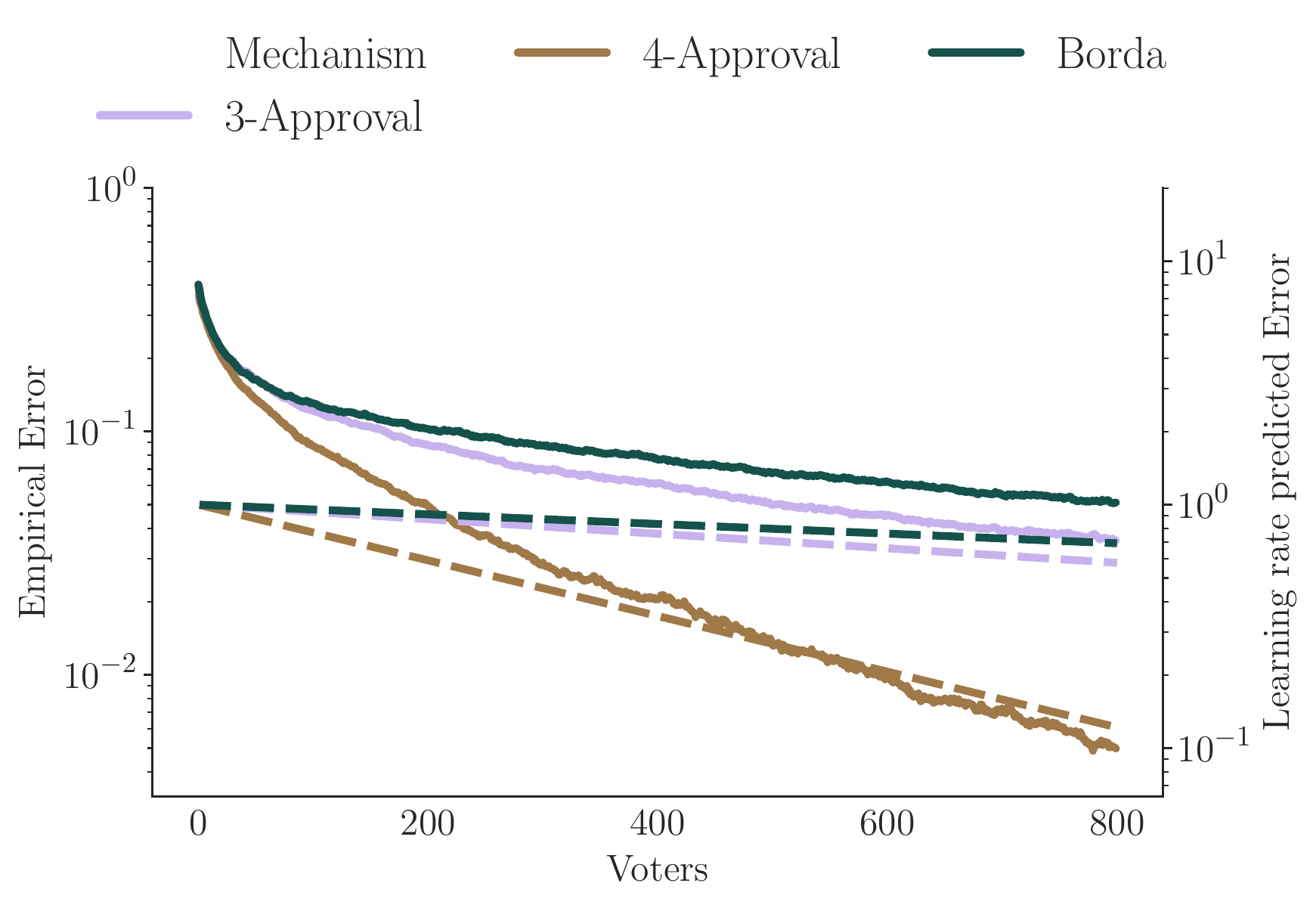}\vfill
		\caption{Boston 2016, selecting 4 winners.}
		\label{fig:boston164}
	\end{subfigure}\hfill
	\begin{subfigure}[b]{.48\textwidth}
		\includegraphics[width=\linewidth]{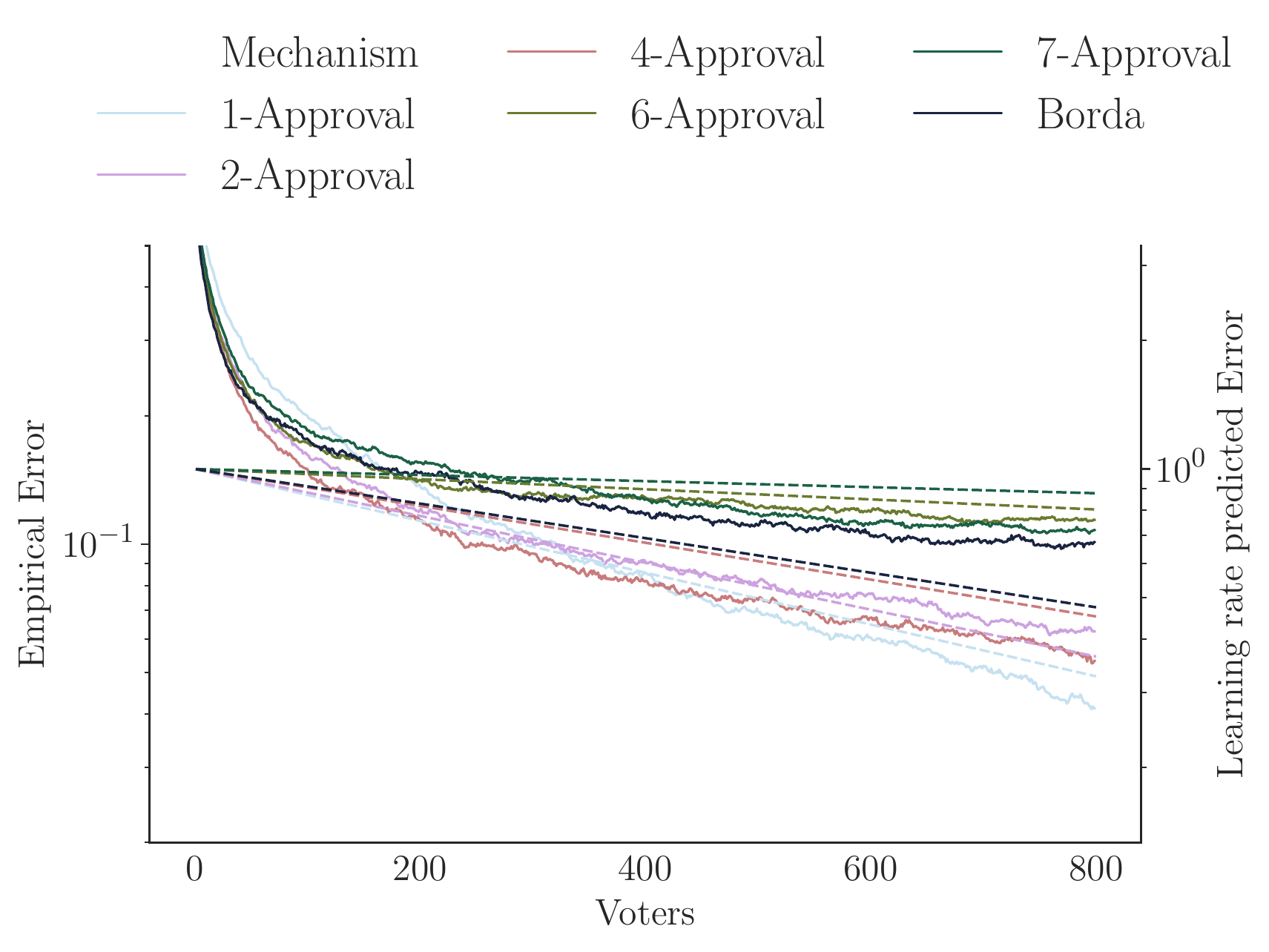}\vfill
		\caption{Durham Ward 1, selecting 4 winners. $K$-Approval for $K\in\{1, \dots, 7\}$ and the Borda rule all have the same asymptotic winners, but we omit several mechanisms from the plot for visualization ease. %
}
		\label{fig:durham191_7}

	\end{subfigure}\hfill
	\caption{Average bootstrapped error (fraction of winning subset not identified) by the number of voters, compared to the errors implied by the (empirically calculated) learning rates. All mechanisms plotted have the same asymptotic winners.}
	\label{fig:errorovertime_app}
\end{figure*}

\begin{figure*}
	\centering
	\begin{subfigure}[b]{.48\textwidth}
	\includegraphics[width=\linewidth]{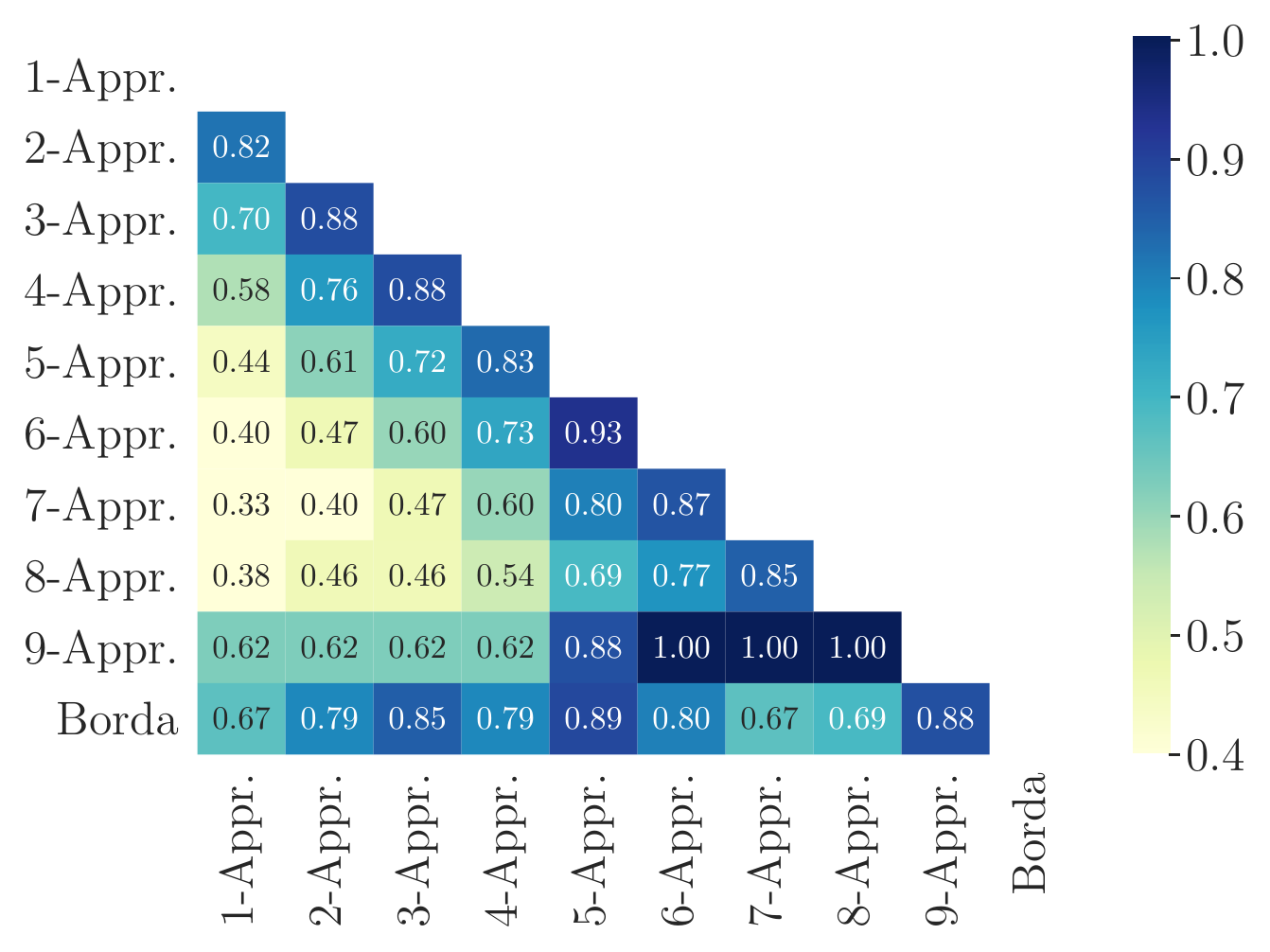}\vfill
	\caption{Task of selecting $W=1$ winners. }
\end{subfigure}\hfill
	\begin{subfigure}[b]{.48\textwidth}
		\includegraphics[width=\linewidth]{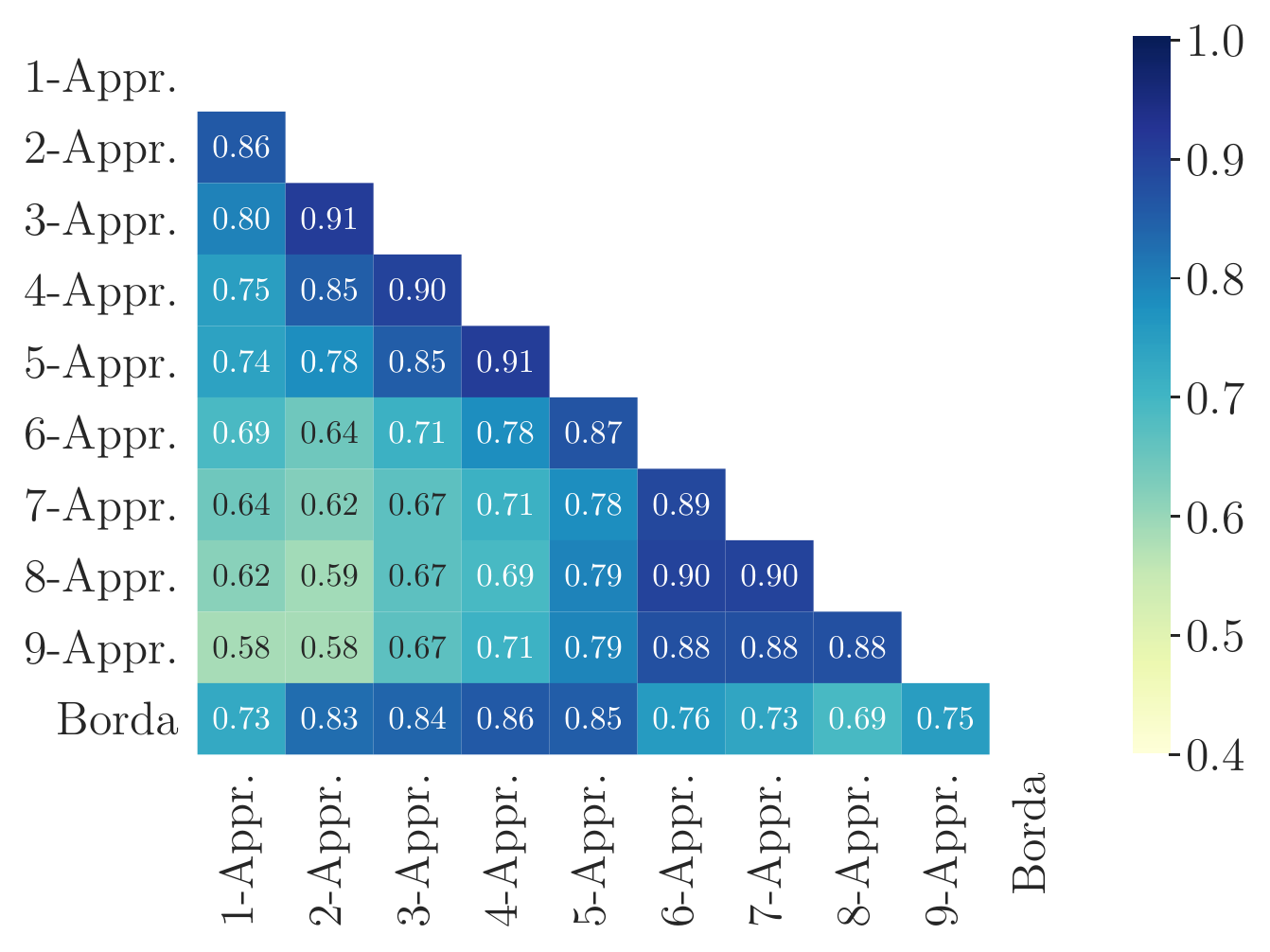}\vfill
		\caption{Task of selecting $W=3$ winners. }
	\end{subfigure}\hfill

	\begin{subfigure}[b]{.48\textwidth}
	\includegraphics[width=\linewidth]{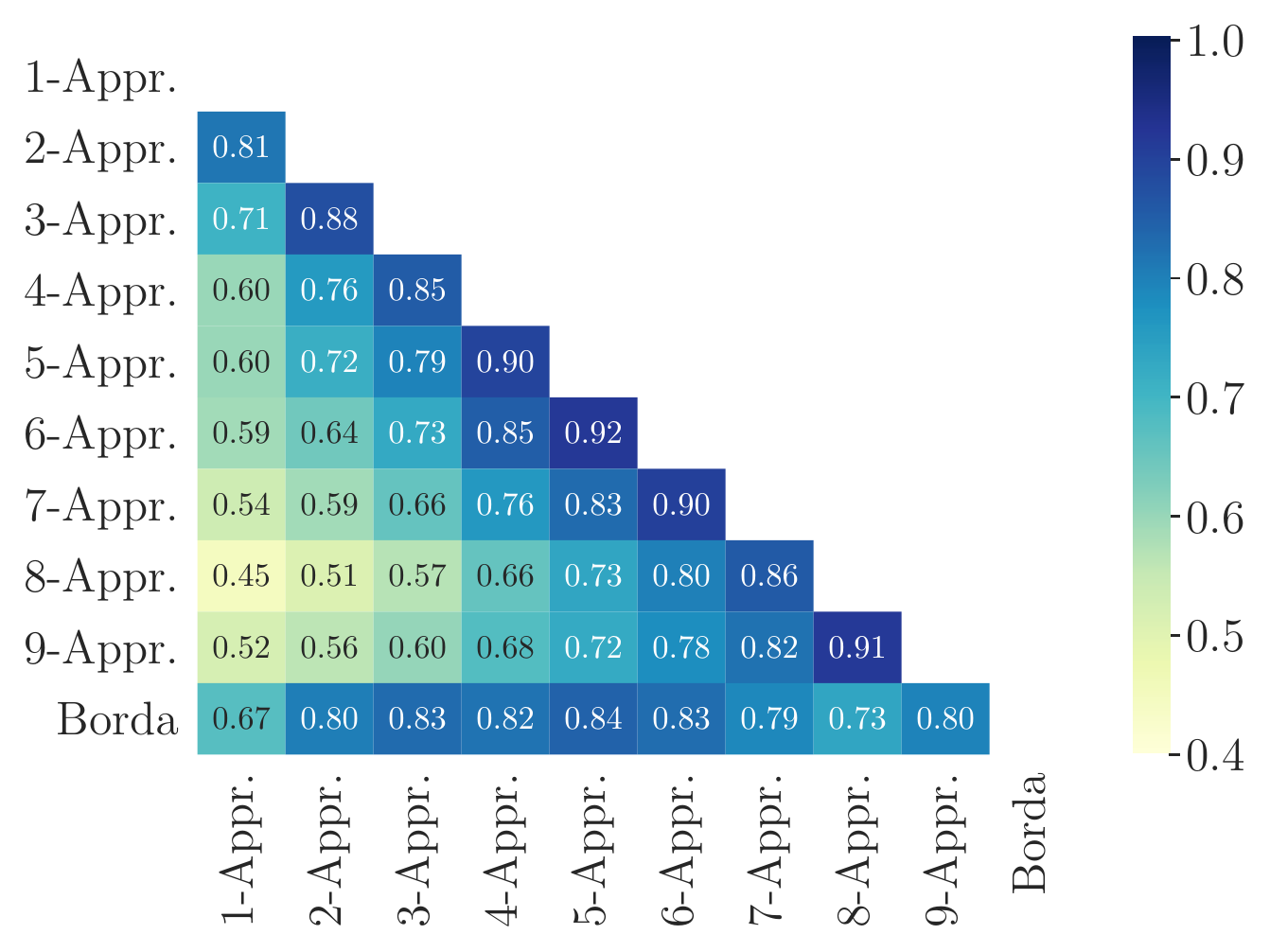}\vfill
	\caption{Task of ranking all candidates. The values plotted are the average Kendall's $\tau$ rank correlation between resulting rankings.}
\end{subfigure}\hfill

	\caption{ More approximate design invariance plots}
			\label{fig:approxdesigninvar_app}
\end{figure*}

\makeatletter{}%
\begin{table}
\begin{center}
\begin{tabular}{lclc}
\toprule
\textbf{Dep. Variable:}                   &  Best Mechanism  & \textbf{  R-squared:         } &     0.273   \\
\textbf{Model:}                           &       OLS        & \textbf{  Adj. R-squared:    } &     0.264   \\
\textbf{Method:}                          &  Least Squares   & \textbf{  F-statistic:       } &     42.68   \\
\textbf{Date:}                            & Wed, 12 Jun 2019 & \textbf{  Prob (F-statistic):} &  2.72e-11   \\
\textbf{Time:}                            &     16:17:32     & \textbf{  Log-Likelihood:    } &   -531.30   \\
\textbf{No. Observations:}                &         241      & \textbf{  AIC:               } &     1071.   \\
\textbf{Df Residuals:}                    &         237      & \textbf{  BIC:               } &     1085.   \\
\textbf{Df Model:}                        &           3      & \textbf{                     } &             \\
\bottomrule
\end{tabular}
\begin{tabular}{lcccccc}
                                          & \textbf{coef} & \textbf{std err} & \textbf{z} & \textbf{P$>$$|$z$|$} & \textbf{[0.025} & \textbf{0.975]}  \\
\midrule
\textbf{Intercept}                        &      -0.1687  &        0.411     &    -0.411  &         0.681        &       -0.973    &        0.636     \\
\textbf{Number Winners}                   &       0.9133  &        0.126     &     7.229  &         0.000        &        0.666    &        1.161     \\
\textbf{Number Candidates}                &       0.2662  &        0.057     &     4.630  &         0.000        &        0.154    &        0.379     \\
\textbf{Number Winners:Number Candidates} &      -0.0446  &        0.008     &    -5.786  &         0.000        &       -0.060    &       -0.030     \\
\bottomrule
\end{tabular}
\begin{tabular}{lclc}
\textbf{Omnibus:}       &  8.524 & \textbf{  Durbin-Watson:     } &    1.693  \\
\textbf{Prob(Omnibus):} &  0.014 & \textbf{  Jarque-Bera (JB):  } &    8.414  \\
\textbf{Skew:}          &  0.417 & \textbf{  Prob(JB):          } &   0.0149  \\
\textbf{Kurtosis:}      &  2.624 & \textbf{  Cond. No.          } &     463.  \\
\bottomrule
\end{tabular}
\caption{OLS Regression on the best K to use in K-Approval, by the number of candidates and desired winners. Standard errors are cluster standard errors, where each cluster is an election in our dataset.}
\label{tab:kregression}
\end{center}
\end{table} %

\begin{figure*}
	\centering
	\begin{subfigure}[b]{.56\textwidth}
		\includegraphics[width=\linewidth]{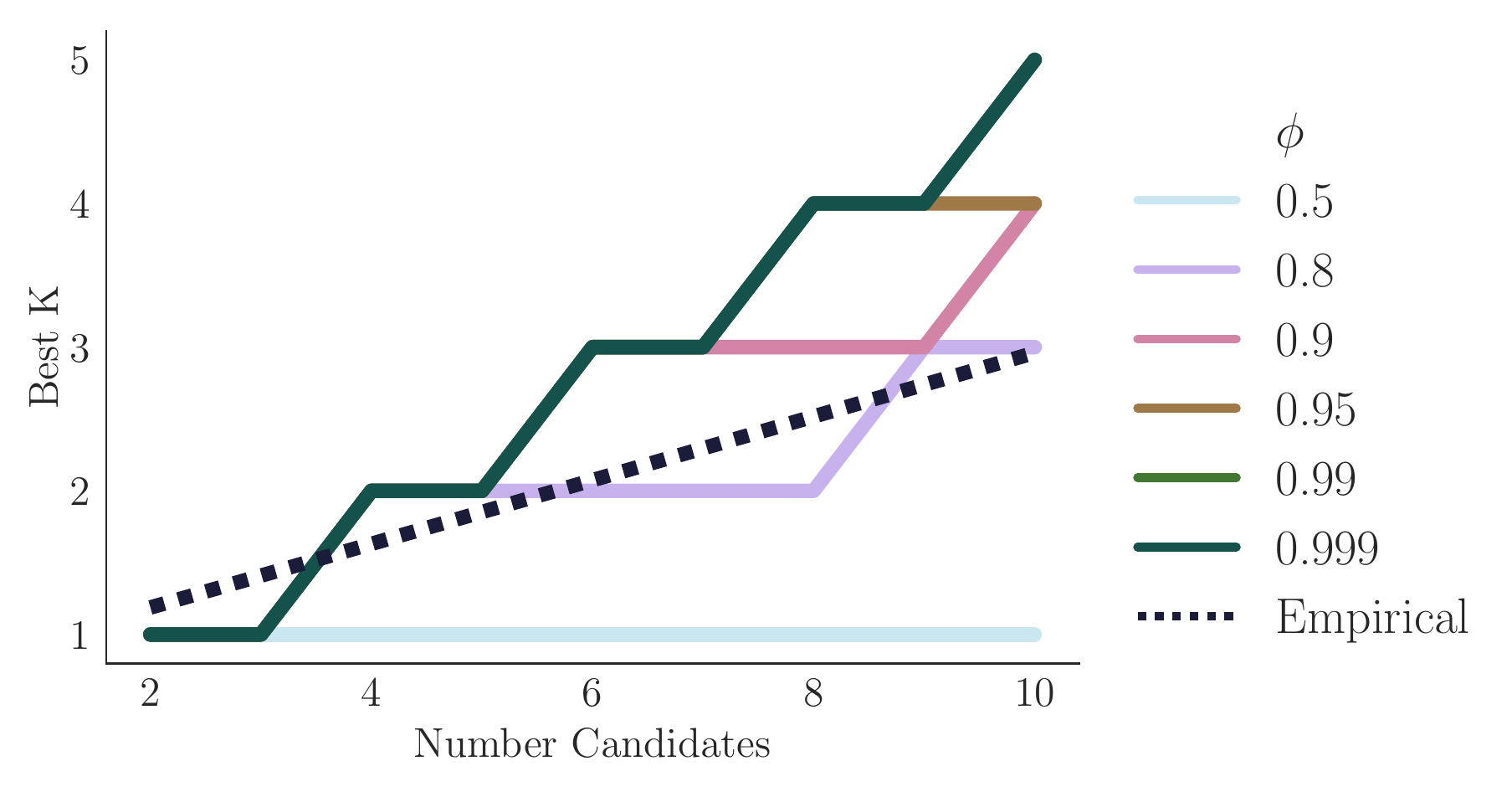}\vfill
		\caption{For selecting $W=1$ winner as number of candidates vary. }
		\label{fig:mallowsnumcandidapp}
	\end{subfigure}\hfill
	\begin{subfigure}[b]{.44\textwidth}
		\includegraphics[width=\linewidth]{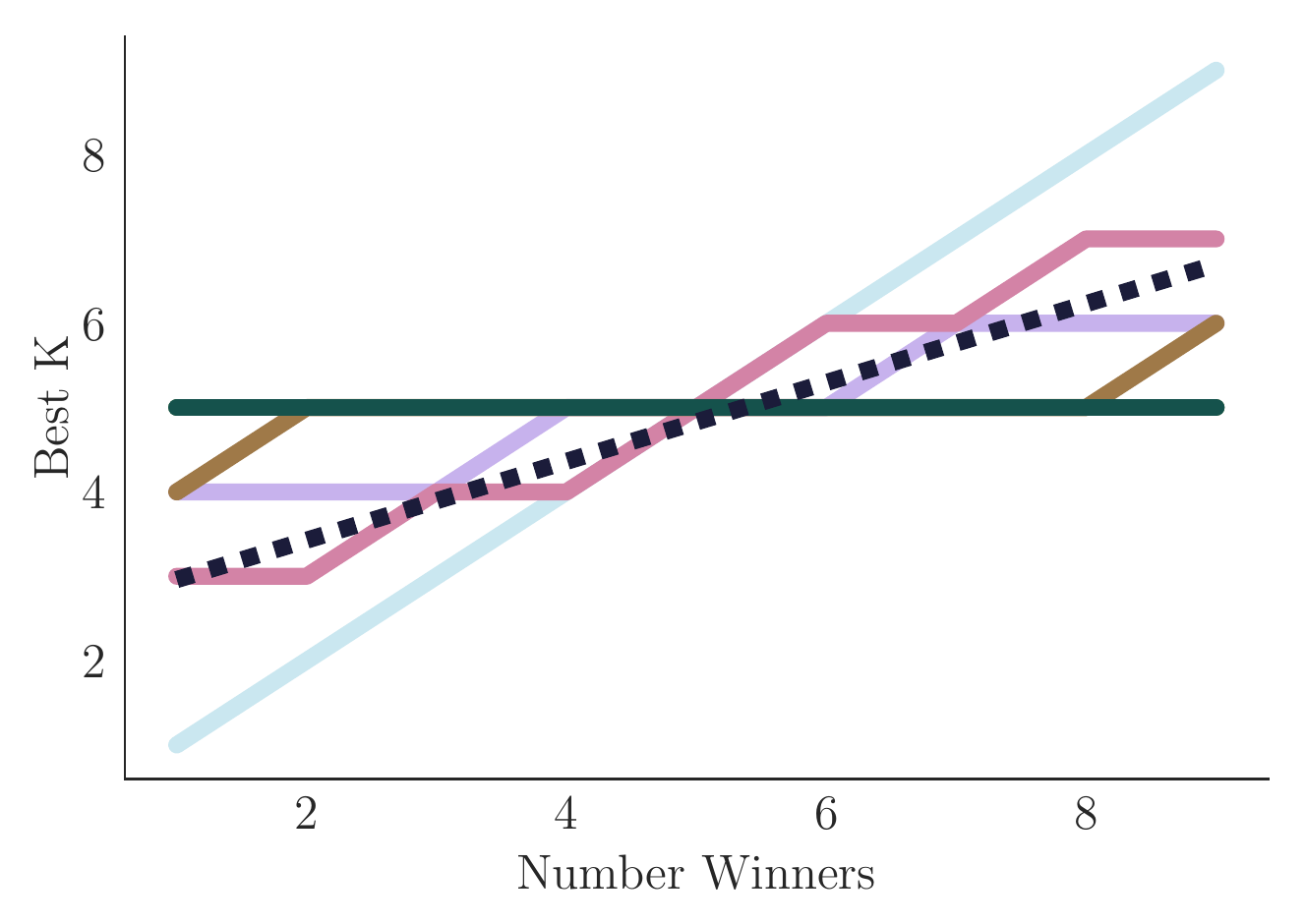}\vfill
		\caption{For $M=10$ candidates as number of winners vary.}
		\label{fig:mallowsnumwinnersapp}
	\end{subfigure}\hfill
	\caption{$K$-Approval rate optimal mechanism for the Mallows model as $\phi$, number of candidates, and number of winners vary. This plot contains an empirical line, which is calculated using the coefficients in the regression contained in Table~\ref{tab:kregression}. }
	\label{fig:mallowsplotsapp}
\end{figure*}

\begin{table}[h]
	\begin{center}
		\begin{tabular}{lcccc}
			\textbf{Election} & \textbf{Number Winners} & \textbf{Mechanism 1} & \textbf{Mechanism 2} & \textbf{Beats Approval rate optimal} \\
			Durham Ward 1, 2019 & 2 & 3 & 4 & \textbf{True} \\ %
			Durham Ward 1, 2019 & 13 & 8 & 9 & \textbf{True} \\ %
			Durham Ward 1, 2019 & 17 & 6 & 7 & \textbf{True} \\ %
			Irish03 & 1 & 1 & 3 & False \\
			Irish03 & 10 & 8 & 9 & False \\
			Irish03 & 10 & 8 & 10 & False \\
			Irish01 & 5 & 3 & 4 & \textbf{True} \\
			Irish01 & 5 & 3 & 5 & False \\
			Irish01 & 5 & 3 & 11 & False \\
			Irish01 & 7 & 1 & 5 & False \\
			Glasgow05 & 2 & 4 & 5 & \textbf{True} \\
			Glasgow05 & 2 & 5 & 6 & \textbf{True} \\
			Glasgow10 & 2 & 3 & 4 & False \\
			Glasgow10 & 2 & 4 & 6 & False \\
			Glasgow10 & 2 & 5 & 6 & \textbf{True} \\
			APA08 & 2 & 2 & 3 & \textbf{True} \\
		\end{tabular}
		\caption{Elections and goals where randomizing between two $K$-Approval mechanisms produces leads to faster learning than using either of the mechanisms separately. For several of these cases, randomization also beats the Approval rate optimal mechanism. }
		\label{tab:randomization}
	\end{center}
\end{table}

\FloatBarrier
%

% %

%
\makeatletter{}%
\section{Proofs}

\subsection{Asymptotic design-invariance}

\thmconsistency*
\begin{proof}
	\begin{align*}
	\forall i \in C: \bbE[s_{iv}]
	&= \sum_{m = 1}^M \beta(m) \pr_F(\sigma_v(i) = m)\\
	&= \sum_{m = 1}^M \beta(m) \pr_F(\sigma_v(i) \leq m) - \sum_{m = 2}^M \beta(m) \pr_F(\sigma_v(i) < m) \\
	&= \sum_{m = 1}^M \beta(m) \pr_F(\sigma_v(i) \leq m) - \sum_{m = 1}^{M-1} \beta(m+1) \pr_F(\sigma_v(i) \leq m) \\
	&= \beta(M) + \sum_{m = 1}^{M-1} \left[ \beta(m)- \beta(m+1)\right] \pr_F(\sigma_v(i) \leq m)  
	\end{align*}
	~\\\\	$\implies$. By the definition of asymptotically design-invariant, 	$$ \exists O^*: \forall \beta \in \mB, \lim_{N \to \infty} \mO(M, N, F, \beta, G) = O^*, \text{ with probability } 1 $$
	For this $O^* = \{C_1^*, \dots, C_T^*\}$, we show by contradiction that $\forall s<t$: $i \in C^*_s, j\in C^*_t \implies$ $\pr_F(\sigma_v(i) \leq k) > \pr_F(\sigma_v(j) \leq k)$, $\forall k\in\{1 \dots M-1 \}$: Suppose $\exists i \in C^*_s, j\in C^*_t, s<t, k \in \{1 \dots M-1 \}$ such that $\pr_F(\sigma_v(i)\leq k) \leq \pr_F(\sigma_v(j)\leq k)$. Then, let 
	$$\beta(m) = \begin{cases} 1 & m \leq k \\
	0 & m > k	
	\end{cases}$$
	Then, $\bbE[s_{iv}] = \beta(M) + \beta(k)\pr_F(\sigma_v(i)\leq k) \leq \bbE[s_{jv}]$. Then, with positive probability, $$\lim_{N \to \infty} \mO(M, N, F, \beta, G) \neq O^*$$. 

	~\\$\impliedby$. Suppose there exists such a  $O^*$. Then, $\forall \beta \in \mB = \{\beta : \forall k<\ell \in {1 \dots M}, \beta(k)\geq \beta(\ell), \text{ and } \exists k<\ell, \beta(k)> \beta(\ell) \}$:
	Suppose $i\in C_s^*, j \in C_t^*, s < t$:
	\begin{align*}
	\bbE[s_{iv}] &= \beta(M) + \sum_{m = 1}^{M-1} \left[ \beta(m)- \beta(m+1)\right] \pr_F(\sigma_v(i) \leq m)\\
	&> \bbE[s_{jv}]
	\end{align*}
	Where the strict inequality follows as $\exists m: \beta(m) - \beta(m+1)>0$. Then, for all candidates $i\in C_s^*, j \in C_t^*, s < t$, by the strong law of large numbers $\lim_{N\to\infty} s_i^N > \lim_{N\to\infty} s_j^N$ w.p. $1$. Thus, $\lim_{N \to \infty} \mO(M, N, F, \beta, G) = O^*$ w.p. $1$. 
\end{proof}

\begin{remark}
	The following example, with candidates $A,B,C,D$ leads to a disjoint set of $2$ winners with $1$-Approval and $2$ approval, respectively
	\FloatBarrier
	\begin{table}[h]
		\begin{tabular}{l|ccccc}
			& Voter 1 & Voter 2 & Voter 3 & Voter 4 & Voter 5 \\ \hline
			Rank 1 & A       & A       & D       & D       & B       \\
			Rank 2 & B       & B       & C       & C       & C       \\
			Rank 3 & C       & C       & B       & B       & D       \\
			Rank 4 & D       & D       & A       & A       & A      
		\end{tabular}
	\end{table}
	\FloatBarrier
	With $1$-Approval, candidates $A,D$ are selected. With $2$-Approval, $B,C$ are selected.
	
\end{remark}

\subsection{Learning rates}
\parbold{Notation}

$t_{ij}^i(K)$ is the probability that $i$ is approved but $j$ is not, using $K$-Approval.

For convenience, we overload the rate function $r(\cdot)$:
\begin{itemize}
	\item $r_{ij}(\beta)$ is as defined in Proposition~\ref{thm:pairwiselearning}, the large deviation rate to learn a pair of candidates $i,j$ given scoring rule $\beta$, for a fixed $F$ that should be clear from context. When a goal $G$ is clear from context, $r(\beta)$ is as defined in Proposition~\ref{thm:goal_learning}, the minimum over $r_{ij}(\beta)$ for candidate pairs that are in different asymptotic tiers. 
	\item $r_{ij}(K)$ is as defined in Proposition~\ref{lem:pairwiselearning_approval}, the large deviation rate to learn a pair of candidates $i,j$ using $K$-Approval. $r(K)$ is analogous to the previous item when using $K$ approval. 
	\item $r(a,b)$ is the large deviation rate to learn a pair of candidates $i,j$ using approval voting when the probability that $i$ is approved but $j$ is not is $a$; and $b$ is the probability that $j$ is approved but $i$ is not is not. 	
\end{itemize}
When which rate function we mean is clear from context, we may drop the argument $(\cdot)$ and just write $r_{ij}$ or $r$.

\begin{remark}
	$r(a,b) > r(c,d)$ when $a> c, b \leq d$, OR $a \geq c, b < d$. 
	\label{rem:tminusaplusa}
\end{remark}
\begin{proof}
	$\gamma(a,b) = \sqrt{ab} + 1 - a - b$ is strictly concave in $a,b$, with maximum at $a=b$. Thus, holding either $a$ or $b$ constant and moving the other farther away strictly decreases $\gamma$, and thus strictly increases $r$. 
\end{proof}

\thmpairwiselearning*
\begin{proof}
	
	Define the following random variable for each voter $v \sim F$: 
	\begin{align*}
	A_v = \beta(\sigma_v(i)) - \beta(\sigma_v(j))
	\end{align*}
	Then, $\sigma^N(i) < \sigma^N(j)$ when $A^N = \sum_{v=1}^N A_v > 0$, and $\bbE[A_v]>0$ by supposition. 
	Let
	\begin{align*}
		r_{ij}(\beta) &= -\inf_{z\in\bbR}{\Lambda(z)}\\
	\Lambda(z) &= \log \left[\sum_{m=1}^M \sum_{\ell\neq m} \pr(\sigma_v(i) = m, \sigma_v(j)=\ell) \exp[ z(\beta(\sigma_v(i)) - \beta(\sigma_v(j)))]   \right]
	\end{align*}
	
	Then, by basic large deviation bounds (see, e.g.~\citet{dembo_large_2010}):
	\begin{align*}
	-\lim_{N\to \infty} \frac{1}{N} \log \pr(A^N \leq 0) &= r_{ij}(\beta)\\
	\end{align*}

	And, applying Chernoff bounds, we get the standard relationship to the large deviation rate, giving an upper bound for the probability of error directly, including any polynomial factors out front:
		\begin{align*}
	\pr(\sigma^N(i) > \sigma^N(j)) &\leq \pr(A^N \leq 0) \\
	&< \left[\inf_{z>0}\bbE[\exp[-zA_v]]\right]^N\\
	&= \left[\inf_{z<0}\left[\sum_{m=1}^M \sum_{\ell\neq m} \pr(\sigma_v(i) = m, \sigma_v(j)=\ell) \exp[ z(\beta(\sigma_v(i)) - \beta(\sigma_v(j)))]\right]\right]^N\\
		&= \left[\inf_{z<0} \exp[\Lambda(z)]\right]^N\\
		&= \exp[-r_{ij}(\beta)N]
	\end{align*}
	
Then, $\pr(\sigma^N(i) < \sigma^N(j)) > 1-\epsilon$ when
\begin{align*}
\exp[-r_{ij}N] &< \epsilon\\
\iff N&> \frac{1}{r_{ij}}\log\left(\frac{1}{\epsilon}\right)
\end{align*}
	
\end{proof}

\lempairwiseapproval*
\begin{proof}
	With $K$-approval voting, $A^{ij}_v$ becomes
	\begin{align*}
	A_v = \begin{cases}
	1 & \text{w.p. } t^i_{ij}, \text{ i.e., when candidate $i$ approved but $j$ not approved}\\
	0 & \text{w.p. } 1-t^i_{ij} - t^j_{ij}, \text{ i.e., when both approved, or neither approved}\\
	-1 & \text{w.p. } t^j_{ij}, \text{ i.e., when candidate $j$ approved but $i$ not approved}
	\end{cases}
	\end{align*}
Then,
	\begin{align*}
r_{ij} &= -\inf_{z\in\bbR}{\Lambda(z)}\\
\Lambda(z) &= \log \left[\sum_{m=1}^M \sum_{\ell\neq m} \pr(\sigma_v(i) = m, \sigma_v(j)=\ell) \exp[ z(\beta(\sigma_v(i)) - \beta(\sigma_v(j)))]   \right]\\
 &= \log \left[t_{ij}^i \exp(z) + t_{ij}^j \exp(-z) + (1 - t_{ij}^i - t_{ij}^j)\right]
\end{align*}
The $\inf(\Lambda)$ is attained at $z = \frac{1}{2} \log\frac{t_{ij}^j}{t_{ij}^i}$ ($\Lambda$ is convex in $z$, and so setting the first derivative to zero finds the $\inf$). 
And so
\begin{align*}
r_{ij} &= -\log\left[ t_{ij}^i \exp\left(\frac{1}{2} \log\frac{t_{ij}^j}{t_{ij}^i}\right) + t_{ij}^j \exp\left(-\frac{1}{2} \log\frac{t_{ij}^j}{t_{ij}^i}\right) + (1 - t_{ij}^i - t_{ij}^j) \right]\\
&= -\log\left[ 2\sqrt{t_{ij}^it_{ij}^j} + 1 - t_{ij}^i - t_{ij}^j \right]
	\end{align*}
\end{proof}

\thmgoallearning*
\begin{proof}

By the Union bound
\begin{align*}
Q^N &= \sum_{i\in C_s^*,j\in C_t^*, s < t} \pr(\sigma^N(i) > \sigma^N(j)) \\
&\leq \sum_{i\in C_s^*,j\in C_t^*, s < t} \exp[-r_{ij}N] & \text{Proposition }~\ref{thm:pairwiselearning}\\
&\leq M^2  \exp[-rN]\\
\end{align*}

Now, using large deviation properties:

By supposition, $Q^N \to 0$, and so $-Q^N$ approaches $0$ from below.
Then,
\begin{align}
-\lim_{N\to\infty} \frac{1}{N}\log(Q^N)
&=  -\lim_{N\to\infty} \frac{1}{N}\log \sum_{i\in C_s^*,j\in C_t^*, s < t} \pr(\sigma^N(i) > \sigma^N(j))\nonumber\\
&= -\max_{i\in C_s^*,j\in C_t^*, s < t}\left(\lim_{N\to\infty} \frac{1}{N}\log \pr(\sigma^N(i) > \sigma^N(j))\right)\label{eqnstep:ldpproperty}\\
&= \min_{i\in C_s^*,j\in C_t^*, s < t}-\left(\lim_{N\to\infty} \frac{1}{N}\log \pr(\sigma^N(i) > \sigma^N(j))\right)\nonumber\\
&= \min_{i\in C_s^*,j\in C_t^*, s < t} r_{ij} = r\nonumber
\end{align}

Line~\eqref{eqnstep:ldpproperty} follows from: $\forall a^\epsilon_i \geq 0$, ${\lim \sup}_{\epsilon \to 0} \left[\epsilon\log\left(\sum_i^N a^\epsilon_i\right)\right] = \max^N_i {\lim \sup}_{\epsilon \to 0} \epsilon \log(a^\epsilon_i)$.
See, e.g., Lemma 1.2.15 in~\cite{dembo_large_2010} for a proof of this property. 

Thus $r$ is the large deviation rate for $Q^N$. 

\end{proof}

\subsection{Design insights}

\lemrandomizebetterscoring*
\begin{proof}
	From Proposition~\ref{thm:pairwiselearning}, for a given scoring rule $\beta$ and pair of candidates $i,j$, the learning rate is
	\begin{align*}
	r_{ij}(\beta) &= -\inf_{z\in\bbR} \log \bbE_F\left[\exp\left[z\left[\beta(\sigma_v(i)) - \beta(\sigma_v(j))\right]\right]\right]\\
	&= - \log \inf_{z\in\bbR} \bbE_F\left[\exp\left[z\left[\beta(\sigma_v(i)) - \beta(\sigma_v(j))\right]\right]\right]
	\end{align*}
	
	Similarly, if we use scoring rules $\{\beta^u\}_{u=1}^P$, each with probability $d^u$, then, 
	\begin{align*}
	r_{ij}(\{\beta^u\}_{u=1}^P) &= - \log \inf_{z\in\bbR} \bbE_{F,\{d^u,\beta^u\}}\left[\exp\left[z\left[\beta^u(\sigma_v(i)) - \beta^u(\sigma_v(j))\right]\right]\right]\\
	&= - \log \inf_{z\in\bbR} \sum_u d^u  \bbE_{F}\left[\exp\left[z\left[\beta^u(\sigma_v(i)) - \beta^u(\sigma_v(j))\right]\right]\right]
	\end{align*}

	Now, for a single scoring rule $\beta(\cdot)$, let \[\gamma(\beta(1) , \dots, \beta(M)) \triangleq  \bbE_F\left[\exp\left[z\left[\beta(\sigma_v(i)) - \beta(\sigma_v(j))\right]\right]\right]\]

	Below, we show that $\gamma(\beta(1) , \dots, \beta(M))$  is convex in $\beta(k), \forall k, z$. Then, by convexity, $\forall z$
	\begin{align*}
	\sum_{u=1}^{P} d^u \gamma(\beta^u(1), \dots, \beta^u(M))& \geq
	\gamma\left(\sum_{u=1}^{P} d^u\beta^u(1) , \dots, \sum_{u=1}^{P} d^u\beta^u(M)\right)
	\end{align*}
	and so
	\begin{align*}
	\inf_z\left[\sum_{u=1}^{P} d^u \gamma(\beta^u(1), \dots, \beta^u(M))\right]& \geq
	\inf_z \gamma\left(\sum_{u=1}^{P} d^u\beta^u(1) , \dots, \sum_{u=1}^{P} d^u\beta^u(M)\right)
	\end{align*}
	
	The left hand side is equal to the argument inside the $-\log(\cdot)$ for the rate function for randomizing between scoring rules $\{\beta^u\}_{u=1}^P$, each with probability $d^u$, and the right hand side is the argument inside for the rate function for instead using the single scoring rule $\beta^*$ defined as the convex combination of $\{\beta^u\}_{u=1}^P$. Then, as $-\log(x)$ is decreasing in $x$, we have that \[r_{ij}(\beta^*) \geq r_{ij}(\{\beta^u\}_{u=1}^P)\].
	
	As this holds for each pair of candidates $i,j$ simultaneously, we are done.

	\parbold{Proof that  $\gamma(\beta(1) , \dots, \beta(M)) \triangleq  \bbE_F\left[\exp\left[z\left[\beta(\sigma_v(i)) - \beta(\sigma_v(j))\right]\right]\right]$ is convex in $\beta(k), \forall k$}
	
	We directly calculate the Hessian of $\gamma$ and note that it is diagonally dominant and thus positive semidefinite. For notational convenience, we let $\beta_k = \beta(k)$, and $\sigma(k,\ell) = \pr_F(\sigma_v(i) = k, \sigma_v(j) = \ell)$. Of course, $\sigma(k,k)=0$, as we assume each voter has a strict ranking as her preference.
	\begin{align*}
	\gamma(\beta_1,\dots,\beta_M) &=   \bbE_F\left[\exp\left[z\left[\beta(\sigma_v(i)) - \beta(\sigma_v(j))\right]\right]\right]\\
	&= \sum_{k=1}^{M}\sum_{\ell = 1}^M \sigma(k,\ell) \exp\left[z \left[\beta_k - \beta_\ell \right]\right]\\
	\frac{\partial}{\partial \beta_k} \gamma(\beta_1,\dots,\beta_M)	&= z\exp\left[z \beta_k\right]\sum_{\ell\neq k}\exp\left[-z \beta_\ell\right]\sigma(k,\ell) - z\exp\left[-z \beta_k\right]\sum_{\ell\neq k}\exp\left[z \beta_\ell\right]\sigma(\ell,k)\\
	\frac{\partial^2}{\partial \left(\beta_k\right)^2} \gamma(\beta_1,\dots,\beta_M)	&= z^2\exp\left[z \beta_k\right]\sum_{\ell\neq k}\exp\left[-z \beta_\ell\right]\sigma(k,\ell) + z^2\exp\left[-z \beta_k\right]\sum_{\ell\neq k}\exp\left[z \beta_\ell\right]\sigma(\ell,k)\\
	&= z^2 \left[\left[\exp\left[z \beta_k\right]\sum_{\ell\neq k}\exp\left[-z \beta_\ell\right]\sigma(k,\ell)\right] + \left[\exp\left[-z \beta_k\right]\sum_{\ell\neq k}\exp\left[z \beta_\ell\right]\sigma(\ell,k)\right]\right]\\
	\frac{\partial^2}{\partial \beta_k\beta_\ell}\gamma(\beta_1,\dots,\beta_M)&= -z^2\left[\exp\left[z \beta_k\right]\exp\left[-z \beta_\ell\right]\sigma(k,\ell) + \exp\left[-z \beta_k\right]\exp\left[z \beta_\ell\right]\sigma(\ell,k)\right]
	\end{align*}
	Thus, the Hessian of $\gamma$ is diagonally dominant with non-negative diagonal elements: $\forall k$,
	$$ \left|\frac{\partial^2 \gamma}{\partial \left(\beta_k\right)^2}  \right| \geq \sum_{\ell\neq k}\left|\frac{\partial^2 \gamma}{\partial \beta_k\beta_\ell}  \right|$$
	and so the Hessian is positive semi-definite. Thus, $\gamma$ is convex in $\beta_k$. 

\end{proof}
\lemrandomizenotbetterapproval*
\begin{proof}
	From Proposition~\ref{lem:pairwiselearning_approval}, for $k$-Approval, 
	\begin{align*}
		r_{ij}(t^i_{ij}, t^j_{ij}) &= -\log\left[ 2\sqrt{t_{ij}^it_{ij}^j} + 1 - t_{ij}^i - t_{ij}^j \right]
	\end{align*}
	Where $t^i_{ij}$ is the probability that $i$ is approved but $j$ is not.
	
	This rate function is convex in $t^i_{ij}, t^j_{ij}$:
	\begin{itemize}
		\item $r_{ij}(a,b) = h(g(a, b))$, where $h(x) = -\log(x)$, $g(a, b) = 2\sqrt{ab} + 1 - a - b$. 
		\item $g(a, b)$ is concave in $a, b$
		\item $h$ is convex, and $\tilde{h}$ is non-increasing, where $\tilde{h}(x) =  \begin{cases} h(x) & x > 0 \\ \infty & x \leq 0 \end{cases}$ is the extended value function of $h$. 
		\item By convex composition rules, $r_{ij}(a,b)$ is convex (see, e.g., page 84 of~\citet{boyd_convex_2004}). 
	\end{itemize}
	
	The result follows by convexity. Consider a randomization of $K$-Approval mechanisms for $K\in\{1, \dots, M-1\}$, where $K$-Approval is used with probability $d^K$.

	The resulting approval probabilities are: $t^i_{ij} = \sum_{K=1}^{M-1} d^K t^i_{ij}(K), t^j_{ij} = \sum_{K=1}^{M-1} d^K t^j_{ij}(K)$. By convexity:
	\begin{align*}
	r\left(\sum_{K=1}^{M-1} d^K t^i_{ij}(K), \sum_{K=1}^{M-1} d^K t^j_{ij}(K)\right) &\leq \sum_{K=1}^{M-1} d^K r\left(t^i_{ij}(K), t^j_{ij}(K)\right)\\
	&= \sum_{K=1}^{M-1} d^K  r_{ij}(K)\\
	&\leq \max_K r_{ij}(K)
	\end{align*}

We note that, unlike the previous proof, we cannot conclude in general that randomization cannot improve the rates at which the outcome is learned (in fact, Theorem~\ref{lem:randomizebetterapproval_Wselection} establishes otherwise). That is because while the same $\beta^*$ could be said to be rate optimal (compared to the randomized mechanism) for every pair of candidates simultaneously in that proof, in this proof $\arg\max_K r_{ij}(K)$ may change based on the pair $i,j$.

\end{proof}

\lemmallowsnorando*
\begin{proof}
	When selecting $W$ winners out of $M$ candidates, we need to separate candidates $1 \dots W$ from candidates $W + 1 \dots M$. It is easy to show that the pivotal pair, regardless of which $K$ is used in $K$-Approval, is $W$,$W+1$. Applying Theorem~\ref{lem:randomizenotbetterapproval}, then, randomization cannot help the overall rate.
\end{proof}

\lemmallowsnotWK*
\begin{proof}
	
	We prove the result by providing an example where it is not optimal. Suppose there are 4 candidates, and we wish to select $3$ winners, i.e., separate the first three items from the last item.

Let the items in the reference ranking be, in order, $i=1, 2, 3, 4$, respectively.

A Mallows model (with parameter $\phi = \frac{p}{1-p}$, where $p$ is the probability of flipping a given pair of candidates) can be sampled by repeated insertion \citep{lu14a,diaconis_group_nodate}: starting from the first item in the reference ranking, there exists probability $\tilde{p}_{ij} = \frac{\phi^{i-j}}{1 + \phi^1 + \dots + \phi^{i-1}}$ at which item $i$ can be inserted into position $j\leq i$, \textit{independently} of how items above it were inserted, such that the resulting ranking distribution matches the Mallows model.

Using this repeated insertion property for our example, we can derive $p_{\ell k}$, the probability at which item $3$ is in position $\ell$ and item $4$ is in position $k$ after sampling from a Mallows model with parameter $\phi$. 

In particular, if $\ell < k$, $p_{\ell k}$ is exactly the probability that item $3$ is inserted in position $\ell$ and item $4$ is inserted in position $k$. If $k>\ell$, however, it is the probability that item $3$ is inserted in position $\ell - 1$ and then pushed down when item $4$ is inserted in position $k$. (More generally, it turns out, the exact probability for an item appearing in a given position in the Mallows model can be calculated using a simple dynamic program, a fact that does not appear to be documented  elsewhere but may be independently useful. We used this dynamic program to find this given example). 

Then, for our example
\begin{align*}
N_i &\triangleq 1 + \phi^1 + \dots + \phi^{i-1} \\
p_{\ell k} &= \begin{bmatrix}
0 & \frac{\phi^2}{N_3}\frac{\phi^2}{N_4} & \frac{\phi^2}{N_3}\frac{\phi^1}{N_4} & \frac{\phi^2}{N_3}\frac{\phi^0}{N_4}\\
\frac{\phi^2}{N_3}\frac{\phi^3}{N_4} & 0 & \frac{\phi^1}{N_3}\frac{\phi^1}{N_4} & \frac{\phi^1}{N_3}\frac{\phi^0}{N_4}\\
\frac{\phi^1}{N_3}\frac{\phi^3}{N_4} & \frac{\phi^1}{N_3}\frac{\phi^2}{N_4} & 0 & \frac{\phi^0}{N_3}\frac{\phi^0}{N_4}\\
\frac{\phi^0}{N_3}\frac{\phi^3}{N_4} & \frac{\phi^0}{N_3}\frac{\phi^2}{N_4} & \frac{\phi^0}{N_3}\frac{\phi^1}{N_4} & 0\\
\end{bmatrix}_{\ell k}
=
\frac{1}{N_3N_4}\begin{bmatrix}
0 & \phi^4 & \phi^3 & {\phi^2}\\
\phi^5 & 0 & \phi^2 & {\phi^1}\\
\phi^4 & \phi^3 & 0 & 1\\
{\phi^3} & {\phi^2} & {\phi^1} & 0\\
\end{bmatrix}_{\ell k}
\end{align*}

Recall that $t_{ij}^i(K)$ is the probability that $i$ is approved but $j$ is not, using $K$-Approval. Then, if we use $3$-approval and $2$-approval, respectively:
\begin{align*}
N_3 &= 1 + \phi + \phi^2\\
N_4 &= 1 + \phi + \phi^2+ \phi^3\\
t_{34}^3(3) &= p_{14} + p_{24} + p_{34}= \frac{\phi^2  + \phi^1 + 1}{N_3N_4}= \frac{1}{N_4}\\
t_{34}^4(3) &= p_{41} + p_{42} + p_{43} = \frac{\phi^3  + \phi^2 + \phi^1}{N_3N_4} = \frac{\phi}{N_4}\\
t_{34}^3(2) &= p_{14} + p_{24} + p_{13} + p_{23} = \frac{\phi^2  + \phi^1 + \phi^3 + \phi^2}{N_3N_4}\\
t_{34}^4(2) &= p_{41} + p_{42} + p_{31} + p_{32} = \frac{\phi^3  + \phi^2 + \phi^4 + \phi^3}{N_3N_4}\\
\end{align*}

Then, recall the rate between items $i,j$ using $K$ approval is
\begin{align*}
	r_{ij}(K) &= -\log\left[ 2\sqrt{t_{ij}^i(K)t_{ij}^j(K)} + 1 - t_{ij}^i(K) - t_{ij}^j(K) \right]\\
	 r_{34}(3) &= 	-\log\left[ 2\sqrt{\frac{1}{N_4}\frac{\phi}{N_4}} + 1 - \frac{1}{N_4} - \frac{\phi}{N_4} \right]\\
	 r_{34}(2) &= 	-\log\left[ 2\sqrt{\frac{\phi^2  + \phi^1 + \phi^3 + \phi^2}{N_3N_4}\frac{\phi^3  + \phi^2 + \phi^4 + \phi^3}{N_3N_4}} + 1 - \frac{\phi^2  + \phi^1 + \phi^3 + \phi^2}{N_3N_4} - \frac{\phi^3  + \phi^2 + \phi^4 + \phi^3}{N_3N_4} \right]
\end{align*}

When there is low noise, e.g., $\phi = .1$ ($p = .091$):
\begin{align*}
r_{34}(3) &= .5462\\
r_{34}(2) &= .04696 < r_{34}(3)
\end{align*}

But when there is high noise, e.g., $\phi = .8$ ($p=.44$):
\begin{align*}
r_{34}(3) &= .00378\\
r_{34}(2) &= .00402 > r_{34}(3)
\end{align*}

Note that the same example works for selecting 1 winner out of the 4 candidates, as the repeated insertion model can be run in reverse.

\end{proof}

\lemrandomizebetterapprovalWselection*
\begin{proof}
We provide two proofs: a numeric example from a real-world election, and a contrived, constructed example.

\parbold{Numeric example found in a real election} In Durham Ward 1, to select 2 winners, randomizing between 3 and 4-Approval is better than either individually, even though asymptotically the mechanisms pick the same set of winners. The critical pair with $2$-Approval is with the candidate asymptotically ranked 1st, and the best item not selected. With $3$-Approval, it is with the candidate asymptotically ranked 2nd, and the same best item not selected.

We will call these items $h$, $i,j$ (the one not selected) respectively. The respective probabilities of being selected alone:

\begin{align*}
t_{hj}^h(3)&= 0.277 \\
t_{hj}^h(4)&= 0.266	\\
t_{hj}^j(3)&= 0.200	\\
t_{hj}^j(4)&= 0.188	\\
t_{ij}^i(3)&= 0.255	\\
t_{ij}^i(4)&= 0.295	\\
t_{ij}^j(3)&= 0.160\\
t_{ij}^j(4)&= 0.217\\\\
t_{hj}^h(\{3, 4\})&= 0.271	 \\
t_{hj}^j(\{3, 4\})&= 0.194	\\
t_{ij}^i(\{3, 4\})&= 0.275	\\
t_{ij}^j(\{3, 4\})&= 0.189\\
\end{align*}

And the resulting rates (using the formula in Proposition~\ref{lem:pairwiselearning_approval}) are:

\begin{align*}
r_{hj}(3) &= .00616932\\
r_{ij}(3) &= .01114061\\
r_{hj}(4) &= .00677352\\
r_{ij}(4) &= .00592327\\
r_{hj}(\{3, 4\}) &= .00642839\\
r_{ij}(\{3, 4\}) &= .00815633\\
r(3) &= \min(r_{hj}(3), r_{ij}(3)) = .00616932\\
r(4) &= \min(r_{hj}(4), r_{ij}(4)) = .00592327\\
r(\{3, 4\}) &= \min(r_{hj}(\{3, 4\}), r_{ij}(\{3, 4\})) = .00642839 > \max(r(3), r(4))
\end{align*}

Thus randomization improves learning.

\parbold{Constructed example with design invariance}
	We now construct a fully design-invariant example with the same flavor as the numeric example, where which pair is critical changes with the mechanism.
	
	Consider three candidates $h,i,j$, such that $h$ is asymptotically in the set of $W$ winners and $i,j$ are not. Thus, we need the rates at which $h$ is separated from both $i,j$. Let $W=K < L=K+1$.
	
	We prove the result by giving an example where: it is easier to separate $h$ from $i$ using $K$-Approval, and easier to separate $h$ from $j$ using $L$-Approval. Using $K$-Approval, $r_{hj}$ asymptotically dominates the rate at which the overall outcome is learned, and using $L$-Approval, $r_{hi}$ does. Further, randomizing between the two mechanisms improves the two rates that dominate enough such that the overall rate is improved. 
	
	We need to show the following hold for our example: one of the rates between candidates $h$ and $i,j$ are smaller than other rates, i.e., dominate the overall learning rate when $K$ and/or $L$ approval is used; randomization between $K$ and $L$ approval helps the minimum rate between candidates $h$ and $i,j$; $K'$-Approval ($K'\neq K, K' \neq L$) produces a worse rate than either $K$ or $L$ approval; and this example is asymptotically design-invariant.
	
	We prove each of these conditions in turn after specifying the example.
	
	Recall that $t_{ij}^i(k)$ is the probability that $i$ is approved but $j$ is not, using $k$-Approval. Here, we will use:	 $$t_{hi}^i(K),t_{hi}^i(L),t_{hi}^h(K),t_{hi}^h(L),
	t_{hj}^j(K),t_{hj}^j(L),t_{hj}^h(K),t_{hj}^h(L)$$. The end row labeled ``Total value'' then sums up these values.

	\parbold{Example Specification}
	Consider $F$ such according to the following table, where the first column is the probabilities of the positions in the second set of columns. The third set of columns indicates whether those set of positions contribute to the given probabilities, for easy accounting. 
\FloatBarrier
	\begin{table}[h]
		\hspace*{-1cm}
		\begin{tabular}{c|c|ccc||cccccccc}
			& & \multicolumn{3}{c||}{Positions of $h,i,j$} & \multicolumn{8}{c}{Contributes to? (Y = Yes)}\\
			Row& $Pr_F(\cdot)$ & $\sigma(h)$          & $\sigma(i)$          & $\sigma(j)$          &$t_{hi}^h(K)$ & $t_{hi}^i(K)$ & $t_{hi}^h(L)$ & $t_{hi}^i(L)$ & $t_{hj}^h(K)$ & $t_{hj}^j(K)$ & $t_{hj}^h(L)$ & $t_{hj}^j(L)$\\ \hline
		1&	$a $   &  $K$    &   $L+1$    & $L$    &Y&&Y&&Y&&&\\ %
		2&	$a$   &  $K-1$    &   $K$    &  $L$   &&&&&Y&&&\\
		3&	$T_1 - {a} - \epsilon$   &  $K$    &   $L+1$    &    $L+2$     &Y&&Y&&Y&&Y&\\
		4&	$T_2-2a$ & $L+1$   &    $L+2$     &   $K$  &&&&&&Y&&Y\\
		5&	$T_2-2a$ & $L+1$   &    $K$     &   $L+2$  &&Y&&Y&&&&\\
		6&	$a$	&   $L+1$      &     {$K$}       &   $L$  &&Y&&Y&&&&{Y}\\
		7&	 $a$	&   $L$      &    $K-1$    &    {$K$}     &&Y&&&&Y&&\\
		8&	 $a$	&   $L$      &   $L+1$ & $L+2$       &&&Y&&&&{Y}&\\ %
		9&	 $a$	&   $L+1$      &    $L+2$    &  {L}       &&&&&&&&Y\\
		10&	$\epsilon$ & \multicolumn{3}{c||}{See caption} &Y&&Y&&Y&&{Y}&\\
		11&	 $0$ & \multicolumn{3}{c||}{Otherwise}&&&&&&&&\\ \hline
			 \multicolumn{5}{r}{\textbf{Total value:}}&$T_1$&$T_2$&$T_1+a$&$T_2-a$&$T_1+a$&$T_2-a$&$T_1$&$T_2$
		\end{tabular} %
\caption{Where the constants such that  $0<\epsilon<a<\frac{T_2}{2}<T_2 < T_1<T_1 + 2T_2 + a=1$, i.e., the table describes a valid probability distribution.%
	Row 10 is as follows: The first $K$ candidates (the asymptotic winners) occupy the first $K$ spots, in an order drawn uniformly at random. Similarly, The bottom $M-K$ candidates occupy the bottom $M-K$ spots, in an order drawn uniformly at random. This randomization ensures asymptotically design invariance.
}
	\end{table}
\FloatBarrier

The table does not specify the probabilities of other candidates appearing in any position, so it is possible that they dominate the learning rate (are hardest to learn). (In particular, if the same, asymptotically non-winning candidate $q$ is always in position $L$ in the case in row $3$, then it may be hard to separate it from candidate $h$ using $L$ approval). However, we can specify the example further to ensure this does not happen.

Suppose candidates are indexed by their order in some strict ranking $\sigma^*$. Then, candidates $h=K, i = L=K+1, j = K+2$. Further suppose that candidates in $\{1 \dots K-1 \}$ always occupy, in order except in case of row 10, the best positions in a voter's ranking that are not reserved for candidates $h,i,j$ in the table above. %

For candidates $q\in\{K+3 \dots M \}$, we have to be more careful to avoid the case in parenthesis above. Suppose these $Q = M-K+2$ candidates fill up the bottom spots in a voter's ranking in a uniform at random order. In other words, they occupy spots $L+3 \dots M$, and the worst spot among whichever of $K, L, L+1,L+2$ is missing in each row in the table above.

\parbold{Rates between the $h$ and $i,j$ dominate the overall learning rate using $K$ or $L$ approval} 

We are now ready to show the first claim that learning between candidates $h$ and $i,j$ is hardest (when using either $K$ or $L$ approval).

By the specification above, candidates in $\{1 \dots K-2 \}$ are always approved, and so learning between those candidates and any non-winning candidate is faster than any large deviations rate. Similarly, candidate $K-1$ always is ranked higher than candidates $q\in\{K+3 \dots M \}$, and it is approved alone with high enough probability. %

Then, the other candidates who may dominate the learning rate are candidate $K-1$ (in separation from $i, j$), or $q\in\{K+3 \dots M \}$ (in separation from $h$). From the above table:
	\begin{align*}
	t_{hq}^{h}(K) &= T_1+a	& \text{Rows 1,2,3,10}\\
	t_{hq}^{q}(K) &= \frac{2a}{Q}	& \text{Rows 8,9} \\
	t_{hq}^{h}(L) &= \frac{Q-1}{Q}\left[T_1 - \epsilon \right] + 3a + \epsilon	& \text{Rows 1,2,7,10; and 3,8 w.p. } \frac{Q-1}{Q} \\ 
	t_{hq}^{q}(L) &= \frac{2T_2 - 3a}{Q}	& \text{Rows 4,5,9 w.p.}  \frac{1}{Q}\\
t_{(K-1)i}^{K-1}(K) &= T_1+T_2 & \text{Rows } 1,3,4,8,9,10\\%\text{Rows } 1,3,8,9 \text{Rows 1,3,4,5,6,10} \\
t_{(K-1)i}^{i}(K) &= 2a	& \text{Rows } 2,7\\
t_{(K-1)j}^{K-1}(K) &= T_1 + T_2+a	& \text{Rows } 1,3,5,6,8,9,10\\
t_{(K-1)j}^{j}(K) &= a	& \text{Rows } 7\\
t_{(K-1)i}^{K-1}(L) &= T_1+T_2 & \text{Rows } 1,3,4,8,9,10\\
t_{(K-1)i}^{i}(L) &= 2a	& \text{Rows } 2,7\\
t_{(K-1)j}^{K-1}(L) &= T_1 + T_2 - 2a	& \text{Rows } 3,5,8,10\\
t_{(K-1)j}^{j}(L) &= 2a	& \text{Rows } 2,7
	\end{align*}
Now, suppose $3a > \frac{T_1}{Q}$ and $Q>2$. (Both conditions occur for $Q$ large enough). Then, applying Remark~\ref{rem:tminusaplusa} regarding learning rates being larger when the arguments are farther away from one another (holding one fixed), the resulting rates with these candidates are dominated by (larger than) the rates between candidates $h$ and $i,j$, discussed next.

\parbold{Randomizing improves the minimum rate between candidates $h$ and $i,j$}
By Remark~\ref{rem:tminusaplusa}, 
\begin{align*}
r(T_1+a,T_2-a) >r(T_1,T_2)
\end{align*} 

Using $K$-Approval:
\begin{align*}
\text{Rate between $h,i$: }\,\,\,\,\,\,\,\,\,\,\ &r_{hi}(K) = r(T_1,T_2)\\
\text{Rate between $h,j$: }\,\,\,\,\,\,\,\,\,\,\ &r_{hj}(K) = r(T_1+a,T_2-a)\\
\text{Overall rate: }\,\,\,\,\,\,\,\,\,\,\ &r(K) = \min(r_{hi}(K),r_{hj}(K)) = r(T_1,T_2)
\end{align*}

Using $L$-Approval:
\begin{align*}
\text{Rate between $h,i$: }\,\,\,\,\,\,\,\,\,\,\ &r_{hi}(L) = r(T_1+a,T_2-a)\\
\text{Rate between $h,j$: }\,\,\,\,\,\,\,\,\,\,\ &r_{hj}(L) = r(T_1,T_2)\\
\text{Overall rate: }\,\,\,\,\,\,\,\,\,\,\ &r(K) = \min(r_{hi}(L),r_{hj}(L)) = r(T_1,T_2)
\end{align*}

Randomizing -- For any $0<p<1$, eliciting $K$-Approval with probability $p$, and $L$-Approval otherwise:
\begin{align*}
\text{Rate between $h,i$: }\,\,\,\,\,\,\,\,\,\,\ & r(T_1+(1-p)a,T_2-(1-p)a)\\
\text{Rate between $h,j$: }\,\,\,\,\,\,\,\,\,\,\  & r(T_1+pa,T_2-pa)\\
\text{Overall rate: }\,\,\,\,\,\,\,\,\,\,\ r(K) &= r(T_1 + \phi a,T_2 - \phi a) & \phi =\min(p,1-p)\\ %
&>r(T_1,T_2)  & \text{Remark }~\ref{rem:tminusaplusa}
\end{align*}
\parbold{$K'$-Approval ($K'\neq K, K' \neq L$) produces a worse rate than either $K$ or $L$ approval}

For any $K' < K-1$, $h$ is approved with probability $\epsilon_2 < \epsilon$, and $i,j$ are never approved. Then, the rate between $h$ and $i,j$ is $-\log(1 - \epsilon_2) \to 0$ as $\epsilon\to 0$. Identically, for $K'\geq L+2=K+3$, both $h$ and $i,j$ are approved except with some probability $\epsilon_2 < \epsilon$. 

For $K' = K-1$, $h$ is approved without $i$ with probability $a + \epsilon_2$ (for some $\epsilon_2 < \epsilon$), and $i$ is approved without $h$ with probability $a$. Then, the rate between $h$ and $i$ is $-\log(2\sqrt{(a + \epsilon_2)a} + 1 - 2a - \epsilon_2) \to 0$ as $\epsilon\to 0$.

For $K' = K+2 = L+1$, $h$ is approved without $i$ with probability $T_2 - a + \epsilon_2$, and $i$ is approved without $h$ with probability $0$. Then, the rate between them is $-\log(1 - T_2 + a - \epsilon_2)$. For $T_2$ small enough, this is a worse rate than using $K$ or $L$ approval. 

\parbold{The example described is asymptotically design-invariant}
From the above table, the probability that candidate $c\in\{1\dots M\}$ is in position $k$ or better, i.e., $\sigma(c)\leq k$ is:
\FloatBarrier
\begin{table}[h]
	\begin{tabular}{l|ccccccc}
		Candidate              & $k<K-1$ & $K-1$ & $K$ & $L=K+1$ & $K+2$ & $K+3$ & $M>k>K+3$ \\ \hline
	$w\in \{1 \dots K-2\}$		 &  $>0$       &   $>1-\epsilon$    & $1$    &  $1$       &   $1$    &   $1$    &    $1$     \\
	$ K-1$                 &   $>0$      &  $1-2a$     &  $1-2a$   &   $1-2a$      &   $1$    &   $1$    &    $1$     \\
	$K$                  &   $>0$      &  $>a$     &  $T_1 + a$   &     $T_1 + 3a$    &    $1$   &   $1$    &    $1$     \\\hline
	$L=K+1$                &    $0$     &  $a$     & $T_2$    &  $<T_2+\epsilon$      &   $<1$   &   $<1$    &    $<1$      \\
	$K+2$                  &    $0$     &   $0$    &   $T_2-a$  &    $<T_2+3a+\epsilon$     &   $<1$   &   $<1$    &    $<1$         \\
	$q\in\{K+3 \dots M\} $    &    $0$     &   $0$    &   $\frac{2a}{Q}$  &     $\frac{T_1 + 2T_2 -3a -\epsilon}{Q}$    &    $<1$   &   $<1$    &    $<1$    
\end{tabular}
\end{table}
\FloatBarrier
\parbold{Conditions on constants in problem}
For the above claims to hold, we set conditions on the constants in the problem. They are
\begin{align*}
0<\epsilon<a<\frac{T_2}{2}<T_2 &< T_1<T_1 + 2T_2 + a=1\\
1 - \epsilon &>2\sqrt{T_1T_2} + 1 - T_1 - T_2\\
\frac{T_1 + 2T_2 -3a -\epsilon}{Q} &< T_1 + 3a\\
3a &> \frac{T_1}{Q}\\
Q&>2\\
1 - T_2 + a - \epsilon &> 2\sqrt{T_1T_2} + 1 - T_1 - T_2
\end{align*}

This is a feasible set of constraints: $Q$ can be set large enough to meet conditions 3,4,5 for any fixed $T_1, a, T_2$ that meet condition 1. Condition 2 is weaker than the last condition. That leaves the last condition along with the first one. 
\begin{align*}
1 - T_2 + a - \epsilon &> 2\sqrt{T_1T_2} + 1 - T_1 - T_2\\
\iff a - \epsilon &> 2\sqrt{T_1T_2} - T_1 \\
\iff \frac{T_2}{2} - 2\epsilon &> 2\sqrt{T_1T_2} - T_1 & \text{set } \frac{T_2}{2}-\epsilon = a \\
\iff \frac{T_2}{2} + T_1 &> 2\epsilon + 2\sqrt{T_1T_2}
\end{align*}
which holds for $T_1$ large enough, and $T_2,\epsilon$ small enough. 

\end{proof}

%

%
%
%
%
%
%
%
%
%
%
%
%
%
%
%
%
%
%
%
%
%
%
%
%
%
%
%
%
%
%
%
%
%
%
%
%
%
%
%
%
%
%
%
%
%
%
%
%
%
%
%
%
%
%
%
%
%
%
%
%
%
%
%
%
%
%
%
%
%
%
%
%
%
%
%
%
%
%
%
%
%
%
%
%
%
%
%
%
%
%
%
%
%
%
%
%
%
%
%
%
%
%
%
%
%
%
%
%
%
%
%
%
%
%
%
%
%
%
%
%
%
%
%
%
%
%
%
%
%
%
%
%
%
%
%
%
%
%
%
%
%

%

%
%
%
%
%
%
%
%
%
%
%
%
%
%
%
%
%
%
%
%
%
%
%
%
%
%
% %

\end{document}